\newcommand{\p}{\partial}
\newcommand{\h}{\theta}
\newcommand{\hi}{\theta_i}
\newcommand{\bfx}{\mathbf{x}}
\newcommand{\bfy}{\mathbf{y}}
\newcommand{\bfX}{\boldsymbol{\alpha}}
\newcommand{\ei}{\mathbf{e}_1}
\newcommand{\ej}{\mathbf{e}_2}
\newcommand{\ek}{\mathbf{e}_3}
\newcommand{\fl}{\Phi}
\newcommand{\R}{\mathbb{R}}
\def\*#1{\mathbf{#1}}
\theoremstyle{plain}
\newtheorem{thm}{Theorem}[section]
\theoremstyle{remark}
\newtheorem{rmk}[thm]{Remark}
\theoremstyle{definition}
\newcommand*\samethanks[1][\value{footnote}]{\footnotemark[#1]}
\begin{document}

\title{Self-organization on Riemannian manifolds} 

\author{Razvan C. Fetecau \thanks{Department of Mathematics, Simon Fraser University, 8888 University Dr., Burnaby, BC V5A 1S6, Canada}
\and Beril Zhang \samethanks}

\maketitle

\begin{abstract}
We consider an aggregation model that consists of an active transport equation for the macroscopic population density, where the velocity has a nonlocal functional 
dependence on the density, modelled via an interaction potential. We set up the model on general Riemannian manifolds and provide a framework for constructing interaction potentials which lead to equilibria that are constant on their supports. We consider such potentials for two specific cases (the two-dimensional sphere and the two-dimensional hyperbolic space) and investigate analytically and numerically the long-time behaviour and equilibrium solutions of the aggregation model on these manifolds. Equilibria obtained numerically with other interaction potentials are also presented.

\end{abstract}

\textbf{Keywords}: swarming on manifolds, uniform densities, global attractors, hyperbolic space


\section{Introduction}
\label{sect:intro}

The literature on self-collective behaviour of autonomous agents (e.g., biological organisms, robots, nanoparticles, etc) has been growing very fast recently. One of the main interests of such research is to understand how swarming and flocking behaviours emerge in groups with no leader or external coordination. Such behaviours occur for instance in natural swarms, e.g., flocks of birds or schools of fish \cite{Camazine_etal, Couzin_etal}. Also, swarming and flocking of artificial mobile agents (e.g., robots) in the absence of a centralized coordination mechanism is of major interest in engineering \cite{JaLiMo2003, JiEgerstedt2007}. Consequently, there exists a variety of models for swarming or flocking, ranging from difference equations (discrete in both time and space) \cite{Czirok_PRL1999, Couzin_etal, Cucker:Smale} to ordinary/partial differential equations (continuous in time and discrete/continuous in space) \cite{Chuang_etal, M&K, Eftimie2}.

In this paper we consider an aggregation model that consists in an integro-differential equation for the evolution of a population density $\rho(\bfx,t)$ on a Riemannian manifold $M$:
\begin{subequations}
\label{eqn:model}
\begin{gather}
\rho_{t}+\nabla_M \cdot(\rho \mathbf{v})=0, \label{eqn:pde} \\
\mathbf{v}=-\nabla_M K\ast\rho. \label{eqn:v}%
\end{gather}
\end{subequations}
Here, $K:M\times M \to \R$ is an interaction potential, which models social interactions such as attraction and repulsion, and $K \ast \rho$ is defined as:
\begin{equation}
\label{eqn:conv}
K * \rho(\bfx) = \int_M K(\bfx,\bfy) \rho(\bfy)  d \mu (\bfy),
\end{equation}
where the integration is with respect to the canonical volume form $\mu$ of the Riemannian manifold $M$. 

Model \eqref{eqn:model} set up in Euclidean space $\R^n$ has received a great deal of interest in recent years. On one hand, it has been used in numerous applications  such as swarming in biological groups \cite{M&K}, material science and granular media \cite{CaMcVi2006}, self-assembly of nanoparticles \cite{HoPu2005}, opinion formation \cite{MotschTadmor2014},  robotics and space missions \cite{JiEgerstedt2007}, and molecular dynamics \cite{Haile1992}. On the other hand, there have been excellent progress and insight in the numerics and analysis for model \eqref{eqn:model}. It has been shown numerically that the model can capture a wide variety of self-collective or swarm behaviours, such as aggregations on disks, annuli, rings, soccer balls, etc \cite{KoSuUmBe2011, Brecht_etal2011, BrechtUminsky2012}.  At the same time, the mathematical analysis of model \eqref{eqn:model} in $\R^n$ has posed challenges that stimulated a rich and diverse literature. The issues addressed include the well-posedness of the initial-value problem  \cite{BodnarVelasquez2, BertozziLaurent, Figalli_etal2011, BeLaRo2011}, the long time behaviour of its solutions \cite{Burger:DiFrancesco, LeToBe2009, FeRa10, BertozziCarilloLaurent, FeHuKo11,FeHu13}, evolution in domains with boundaries \cite{CarrilloSlepcevWu2016, FeKo2017}, and studies on minimizers for the associated interaction energy \cite{BaCaLaRa2013, Balague_etalARMA, ChFeTo2015}.

Despite the extensive research on equation \eqref{eqn:model} in Euclidean spaces, there is very little done for the aggregation model posed on arbitrary surfaces or manifolds.
In \cite{WuSlepcev2015}, the authors investigate the well-posedness of the aggregation model \eqref{eqn:model} on Riemannian manifolds, but in a certain restrictive setting (as detailed in the next paragraph). To the best of our knowledge, the present paper is the first to provide a formulation of the aggregation model on general Riemannian manifolds, which we believe has important applications (e.g, in robotics). We point out that it has been only very recently that other classes of models have been considered on surfaces and manifolds too; see for instance \cite{Li2015} for a Vicsek-type model \cite{Czirok_PRL1999} set up on a sphere.

With very few exceptions, in the studies of the aggregation model in $\R^n$, the interaction potential $K(\bfx,\bfy)$ at two locations $\bfx$ and $\bfy$ is assumed to depend on the Euclidean distance between the two points, i.e., $K(\bfx,\bfy) = K(|\bfx- \bfy|)$. In particular, interactions are {\em symmetric}, as two individuals positioned at $\bfx$ and $\bfy$ sense each other equally. Note that for such interaction potentials, the operation $\ast$ defined in \eqref{eqn:conv} is a standard convolution between two scalar functions on $\R^n$. As alluded to above, in \cite{WuSlepcev2015}, the aggregation model is indeed posed on a Riemannian manifold, but the setup is very restrictive, as it is assumed there that the manifold is a subset of $\R^n$ and that mutual interactions depend on the {\em Euclidean} distance (in $\R^n$) between points. 

Different from \cite{WuSlepcev2015}, we consider in this paper general Riemannian manifolds $(M,g)$, for which interactions depend on the {\em geodesic} distance (on $M$) between points. Mathematically, by an abuse of notation, we write:
\begin{equation}
\label{eqn:K-dist}
K(\bfx,\bfy) = K(d(\bfx,\bfy)),
\end{equation}
where $d(\bfx,\bfy)$ denotes the geodesic distance between $\bfx$ and $\bfy$. Not only that \eqref{eqn:K-dist} generalizes naturally the Euclidean setup, but we also find it more meaningful. For example, consider applications of the model in engineering (robotics) \cite{Gazi:Passino, JiEgerstedt2007}, where individual agents/robots are restricted by environment or mobility constraints to remain on a certain manifold. To achieve efficient swarming or flocking, agents must approach each other along geodesics, so the geodesic distance should be built into the model, which in this class of models amounts to incorporating it into the interaction potential. 

Similar to the setup in Euclidean spaces, we assume that there are no limitations on the mutual sensing of individuals, that is, all individuals sense each other. For the example above (coordination of mobile agents), the assumption is valid provided the agents are set to communicate globally with each other via a central unit. For coordination of biological agents, one can assume that individuals possess a sensing mechanism (such as smell for instance) which enables them to communicate with each other regardless of the geometry of the space they live in. We note here that from a mathematical point of view, including local sensing and/or asymmetry/anisotropy in the model is expected to bring up major challenges (see for instance \cite{EvFeRy2015} for a study of anisotropic interactions in model \eqref{eqn:model} posed on $\R^2$).

Model \eqref{eqn:model} is in the form of a continuity equation for the density $\rho$. Note that this is an {\em active} transport equation, as the velocity field defined in \eqref{eqn:v} depends on $\rho$. The interpretation of \eqref{eqn:model} as an aggregation model is in fact encoded in \eqref{eqn:v}: by interacting with a point mass at location $\bfy$, the point mass at $\bfx$ moves either towards or away from $\bfy$. The velocity $\mathbf{v}$ at $\bfx$ computed by \eqref{eqn:v} takes into account all contributions from interactions with point masses $\bfy \in M$. Also, in geometric terms, the continuity equation \eqref{eqn:pde} represents the transport of the volume form $\rho \mu$ along the flow on $M$ generated by the tangent vector field $\mathbf{v}$ \cite[Chapter 8]{AMR1988}. Equivalently, the mass in each subregion of the manifold remains constant through time, as the subregion evolves by the flow. In particular, the total mass $m = \int_M \rho \, d \mu$ is conserved. This geometric interpretation of the aggregation equation will play a major role in the paper.

In the present research we investigate solutions for model \eqref{eqn:model} posed on two simple manifolds: the two-dimensional sphere and the two-dimensional hyperbolic space. We design interaction potentials for which the equilibrium densities have a simple structure that is amenable to analytic investigations (specifically, equilibria are constant on their supports). The strategy for designing such potentials is inspired by \cite{FeHuKo11, FeHu13}, where similar goals were pursued for the aggregation model in Euclidean spaces. Our work is the first to demonstrate that model \eqref{eqn:model} set up on manifolds leads to swarming behaviour that can be studied mathematically. We also perform numerical investigations for other interaction potentials and find a diverse set of equilibrium solutions. Given the outstanding interest shown recently in the aggregation model on $\R^n$, we hope to have set up the stage for further studies on swarming and self-collective behaviour on general manifolds, opening new perspectives and motivating applied mathematicians to expand the research on this class of models to novel applications.  

We finally note that the continuum model \eqref{eqn:model} has an immediate discrete/ODE analogue, in which one considers the time evolution of a fixed number of individuals/particles on a manifold. While the discrete model is used in this paper only for numerical purposes (Sections \ref{subsubsect:numerics-sphere} and \ref{subsubsect:numerics-hyper}), it has an interest in its own (e.g., for applications in biology or robotics). Specifically, equilibria of constant densities, as achieved in this paper, translate in the discrete setup to agents covering uniformly (with respect to the metric) a certain space modelled as a Riemannian manifold. This relates to the coverage problem in robotics, where the goal is to have a group of robots well-distributed over a region/area, so that it achieves an optimal coverage needed for surveillance/tracking (see \cite{CortesEgerstedt2017} for a recent survey on coordinated control of robot systems). The ODE model has been extensively employed for the aggregation model in Euclidean space, either as a numerical tool \cite{KoSuUmBe2011, Brecht_etal2011, Balague_etalARMA, FeHuKo11}, or in rigorous analysis studies that establish the passage from discrete to continuum by mean-field limits  \cite{CaChHa2014}. Consequently, we also expect a rich use and interest for the discrete aggregation model on Riemannian manifolds, as set up in the present paper.

The summary of the paper is as follows. In Section \ref{sect:prelim} we present some preliminaries and provide motivation to the present research. In Section \ref{sect:sphere} we set up the aggregation model on the two-dimensional sphere, with a certain choice of the interaction potential, and investigate the long-time behaviour of solutions. A similar study is done in Section \ref{sect:hyperboloid} for the model on the two-dimensional hyperbolic space. In Section \ref{sect:future} we showcase some numerical simulations with other interaction potentials to further motivate the aggregation model proposed here.



\section{Preliminaries and motivation}
\label{sect:prelim}

In this section we present briefly some standard concepts from differential geometry that will be used in the sequel, as well as bring motivation to the present work.

\paragraph{Local coordinates.} Consider a generic $n$-dimensional Riemannian manifold $(M,g)$ with local coordinates $(x^1, x^2, \dots, x^n)$ and local metric coefficients $g_{ij}$.

Expressed in the local basis $\left \{ \frac{\p }{\p x^1}, \frac{\p }{\p x^2}, \dots, \frac{\p }{\p x^n}\right \}$ of the tangent space, the gradient (with respect to the metric $g$) of a scalar function $f$ on $M$ is the tangent vector $\nabla_M f$ given by
 \begin{equation}
\label{eqn:nabla-M}
 \nabla_M f = g^{ji} \frac{\p f}{\p x^j} \frac{\p }{\p x^i},
 \end{equation}
where $g^{ij}$ are the entries of the metric's inverse, and we used the Einstein convention on index summation \cite{Kuhnel-book}.

Given a tangent vector field $\mathbf{F} = F^i \frac{\p }{\p x^i}$ on $M$, its divergence $\nabla_M \cdot \mathbf{F}$ is given in local coordinates by 
\begin{equation}
\label{eqn:div-M}
\nabla_M \cdot \mathbf{F} = \frac{1}{\sqrt{|g|}} \frac{\p} {\p x^i} \left( \sqrt{|g|} F^i \right),
\end{equation}
where $|g|$ denotes the determinant of the metric.

By combining \eqref{eqn:nabla-M} and \eqref{eqn:div-M} one then finds the Laplace-Beltrami operator $\Delta_M$ in local coordinates:
\begin{equation}
\label{eqn:LB-M}
\Delta_M f =  \frac{1}{\sqrt{|g|}} \frac{\p} {\p x^i} \left( \sqrt{|g|} \,  g^{ji} \frac{\p f}{\p x^j} \right).
\end{equation}

Aside from the operators above, we will also use the representation in local coordinates of the canonical volume form $\mu$ on $(M,g)$ \cite{AMR1988}:
\begin{equation}
\label{eqn:vol-form}
\mu = \sqrt{|g|} \, dx^1 \wedge  \dots \wedge dx^n.
\end{equation}


\paragraph{Continuity equation on Riemannian manifolds.} As discussed in \cite[Chapter 8.2]{AMR1988}, the continuity equation \eqref{eqn:model} posed on Riemannian manifolds has the following interpretation. 

Consider the flow $\fl_t : M \to M$ on $M$ generated by the vector field $\mathbf{v}$, that is:
\begin{equation}
\label{eqn:charact}
\frac{d \Phi_t(\bfX)}{dt} = \mathbf{v} (\Phi_t(\bfX),t).
\end{equation}
For a fixed time $t$, denote by $\rho_t(\bfx) = \rho(\bfx,t)$, and also recall that $\mu$ represents the canonical volume form on $M$. Then, the continuity equation \eqref{eqn:model} is equivalent to the transport of the volume form $\rho_t \mu$ along the flow $\Phi_t$, that is:
\[
\frac{d}{dt} \Phi_t^\ast(\rho_t \mu) =0, \qquad \text{ or } \qquad \Phi_t^\ast(\rho_t \mu) = \rho_0 \mu.
\]
Here, $\Phi_t^\ast$ denotes the pull-back by $\Phi_t$. Moreover, this is equivalent to:
\begin{equation}
\label{eqn:continuity-alt}
\Phi_t^\ast(\rho_t) \cdot J(\Phi_t) = \rho_0,
\end{equation}
where $J(\Phi_t)$ denotes the Jacobian of $\Phi_t$ (with respect to the canonical volume form $\mu$).

We also note that for a smooth, invertible flow map $\Phi_t$, the Jacobian $J(\Phi_t)$ satisfies (see \cite[Proposition 7.1.10]{AMR1988}):
\begin{equation}
\label{eqn:J-flow}
\Phi_t ^\ast \mu = J(\Phi_t)\,  \mu.
\end{equation}


\paragraph{Other properties of the model.} {\em Energy.} The aggregation model \eqref{eqn:model} in Euclidean spaces can be formulated as a gradient flow on the space of probability measures with finite second moments, equipped with the $2$-Wasserstein metric \cite{AGS2005}.  Such interpretation exists as well for the model on Riemannian manifolds with {\em Euclidean} pairwise interactions from \cite{WuSlepcev2015}. The general model \eqref{eqn:model} also has an energy associated to it, which decays with time. Indeed, define:
\begin{equation}
\label{eqn:energy}
E[\rho] = \frac{1}{2} \iint_{M \times M}  K(\bfx,\bfy) \rho(\bfx) \rho(\bfy) \, d \mu(\bfx) d \mu(\bfy).
\end{equation}

By formally computing the evolution of the energy in time, we find:
\begin{align}
\frac{d E}{dt} &= -\int_M \nabla_M \cdot (\rho \mathbf{v}) \, K \ast \rho (\bfx)  d \mu(\bfx) \nonumber \\
& = - \int_M \rho(\bfx) \, g(\mathbf{v},\mathbf{v}) \, d \mu(\bfx) \leq 0. \label{eqn:energy-decay}
\end{align}
For the first equal sign in the derivation above we used the symmetry of the potential and equations \eqref{eqn:model} and \eqref{eqn:conv}. For the second equal sign we used the formula \cite[Proposition 6.5.17]{AMR1988}:
\[
 \nabla_M \cdot (\rho \mathbf{v}) \, K \ast \rho= \nabla_M \cdot (\rho \, K \ast \rho \, \mathbf{v}) - \nabla_{\rho \mathbf{v}}(K \ast \rho),
\]
where $\nabla_{\rho \mathbf{v}}(K \ast \rho)$ denotes the covariant derivative of the scalar function $K \ast \rho$  along the vector field $\rho \mathbf{v}$. Integrating the equation above over $M$, by divergence theorem (either assuming that $M$ has no boundary or that $M$ is non-compact, but density vanishes at infinity) the first term in the right-hand-side yields zero. Then, \eqref{eqn:energy-decay} follows from $\nabla_{\rho \mathbf{v}}(K \ast \rho) = \rho \nabla_{\mathbf{v}}(K \ast \rho)$ (linearity of the covariant derivative) and the definition of gradient on $M$ by which:
\[
\nabla_{\mathbf{v}}(K \ast \rho) = g(\underbrace{\nabla_M K\ast \rho}_{=-\mathbf{v}}, \mathbf{v}).
\]

Though we do not make any further energy considerations in this work, we believe that there is a rich potential for applications of the theory on gradient flows as developed in  \cite{AGS2005, Savare2007} to the model investigated here. In addition, the study of equilibria of model \eqref{eqn:model} as minimizers of energy $E$ seems a very interesting direction to pursue as well. This approach has been proven very successful for the model in $\R^n$ \cite{ChFeTo2015, SiSlTo2015, CaCaPa2015,Balague_etalARMA}.
\medskip

{\em Centre of mass.} An important property of model \eqref{eqn:model} in Euclidean space $\R^n$ is the conservation of the centre of mass. This can be derived easily as follows: multiply equation \eqref{eqn:model} by $\bfx$ and integrate over $\R^n$, then use integration by parts and the symmetry of $K$ to conclude that $\int_{\R^n} \rho(\bfx) d \bfx$ remains constant in time. 

In the context of Riemannian manifolds, as a generalization of the usual centre of mass in $\R^n$, we consider the $L^2$ Riemannian centre of mass (also known as the Karcher mean) \cite{Karcher1977, Afsari2011}. The $L^2$ Riemannian centre of mass (simply referred throughout as the Riemannian centre of mass) of a subset $A \subset M$ is a minimizer in $M$ of the function
\[
f(\bfx ) = \frac{1}{2} \int_A d^2(\bfx,\bfy) d\mu(\bfy).
\]
One can check indeed that for $M=\R^n$, $f$ has a unique minimizer which coincides with the usual centre of mass of set $A$. For general manifolds, existence and uniqueness of the Riemannian centre of mass, along with numerical methods for finding it, are delicate issues \cite{Afsari2011, AfsariTronVidal2013}.

We make the important observation here that model \eqref{eqn:model} does {\em not} necessarily conserve the Riemannian centre of mass. Explaining this fact in detail would be an unnecessary detour for the purpose of this paper. To give some intuition on why such result is not expected to hold in general however, we point our that in $\R^n$, by symmetry of $K$ (see \eqref{eqn:K-dist}, which in $\R^n$ it amounts to $K(\bfx,\bfy) = K(|\bfx-\bfy|)$), $\nabla K$ is antisymmetric in $\bfx$, $\bfy$, i.e., $\nabla_\bfx K(\bfx,\bfy) = - \nabla_\bfy K(\bfx,\bfy)$. This property is used in an essential way to show conservation of centre of mass in $\R^n$. On the other hand, such a property would not even make immediate sense on general Riemannian manifolds, as the manifold gradients with respect to $\bfx$ and $\bfy$ lie in tangent spaces at different points ($\nabla_{M,\bfx} K(\bfx,\bfy) \in T_\bfx M$ and  $\nabla_{M,\bfy} K(\bfx,\bfy) \in T_\bfy M$). The lack of conservation of centre of mass brings an additional challenge for analytical investigations of model \eqref{eqn:model}. We return to this point at the end of Section \ref{sect:hyperboloid}.



\paragraph{The aggregation model in Euclidean space.} To motivate and put the present work in context, we present first some key ideas and calculations for the aggregation model in Euclidean space, that is, model \eqref{eqn:model} set up on $M=\R^n$ endowed with the standard Euclidean metric. We refer here to the research in \cite{FeHuKo11, FeHu13}, where one of the main goals was to design interaction potentials for the aggregation model in $\R^n$ that yield equilibrium states which are biologically relevant and at the same time, are simple enough to be investigated analytically. 

Of particular importance in \cite{FeHuKo11} is an attractive-repulsive interaction potential that yields equilibria of constant densities and compact support. Specifically, this potential consists of Newtonian repulsion and quadratic attraction, which in two dimensions ($n=2$) it amounts to:
\begin{equation}
\label{eqn:K-plane}
K(\bfx,\bfx') = G(\bfx;\bfx') + \frac{1}{2}  |\bfx - \bfx'|^2,
\end{equation}
where $G(\bfx;\bfx')$ denotes the Green's function for the negative Laplacian in $\R^2$:
\begin{equation}
G(\bfx;\bfx') = -\frac{1}{2\pi} \log{|\bfx - \bfx'|}.
\label{eqn:G-plane}
\end{equation}
Note that
\begin{equation}
\label{eqn:KLaplacian-R2}
\Delta K(\bfx,\bfx') = -\delta_{\bfx'} + 2,
\end{equation}
and hence, from \eqref{eqn:v},
\begin{equation}
\begin{aligned}
 \nabla \cdot \mathbf{v} &= -\Delta K * \rho \\
 &= \rho - 2m,
\label{eqn:div-v-R2}
\end{aligned}
\end{equation}
where $m$ is the (constant) total mass. The key conclusion from \eqref{eqn:div-v-R2} is that with this choice of interaction potential, $\nabla \cdot \mathbf{v}$ is a {\em local} quantity, despite the fact that $\mathbf{v}$ itself is nonlocal, as given by \eqref{eqn:v}  through a convolution.

Expand now
\begin{equation}
\label{eqn:div-expand}
\nabla \cdot (\rho \mathbf{v}) = \mathbf{v} \cdot \nabla \rho + \rho \nabla \cdot \mathbf{v},
\end{equation}
and write the continuity equation \eqref{eqn:pde} as 
\begin{equation}
\label{eqn:pde-expand}
\rho_t + \mathbf{v} \cdot \nabla \rho = - \rho \nabla \cdot \mathbf{v}.
\end{equation}
The left-hand-side of \eqref{eqn:pde-expand} is the material derivative of the density $\rho$ along the flow generated by $\mathbf{v}$. Hence, by \eqref{eqn:pde-expand} and \eqref{eqn:div-v-R2}, along the particle path $\Phi_t(\bfX)$ that originates from location $\bfX$ (see equation \eqref{eqn:charact}), $\rho(\Phi_t(\bfX),t)$ satisfies
\begin{equation}
\label{eqn:evol-charact}
\frac{D}{Dt} \rho =-\rho (\rho - 2m).
\end{equation}
Note that the right-hand-side of \eqref{eqn:evol-charact} depends only on $\rho$ evaluated along the carrying characteristic $\Phi_t(\bfX)$, and not on values of $\rho$ along any other other characteristic. Therefore, this ODE for $\rho(\Phi_t(\bfX),t)$ can be investigated individually. 

From \eqref{eqn:evol-charact} one infers immediately that the solution $\rho(\Phi_t(\bfX),t)$ approaches the value $2m$ as $t \to \infty$. Consequently, equilibria of model \eqref{eqn:model} in $\R^2$, with potential \eqref{eqn:K-plane}, have constant densities on their supports. In fact, it was shown in \cite{FeHuKo11, BertozziLaurentLeger} that the constant density supported on a disk, i.e., 
\begin{equation}
\bar{\rho}(\bfx)=
\begin{cases}
2m & \quad\text{ if }\;|\bfx|<\frac{1}{\sqrt{2\pi}}\\
0 & \quad\text{ otherwise },
\end{cases}
\label{eqn:sstate-R2}
\end{equation}
is a global attractor for solutions of model \eqref{eqn:model} with potential \eqref{eqn:K-plane} in $\R^2$; see Figure \ref{fig:plane} for a numerical simulation using a particle method.

The considerations above extend to the aggregation model set in Euclidean $\R^n$ of arbitrary dimension. Indeed, one can construct an interaction potential that leads to similar long-time behaviour of solutions (for $G(\bfx,\bfx')$ one would have to use the fundamental solution of the negative Laplacian in $\R^n$ instead).

\begin{figure}[htb]
 \begin{center}
 \begin{tabular}{cc}
 \includegraphics[width=0.5\textwidth]{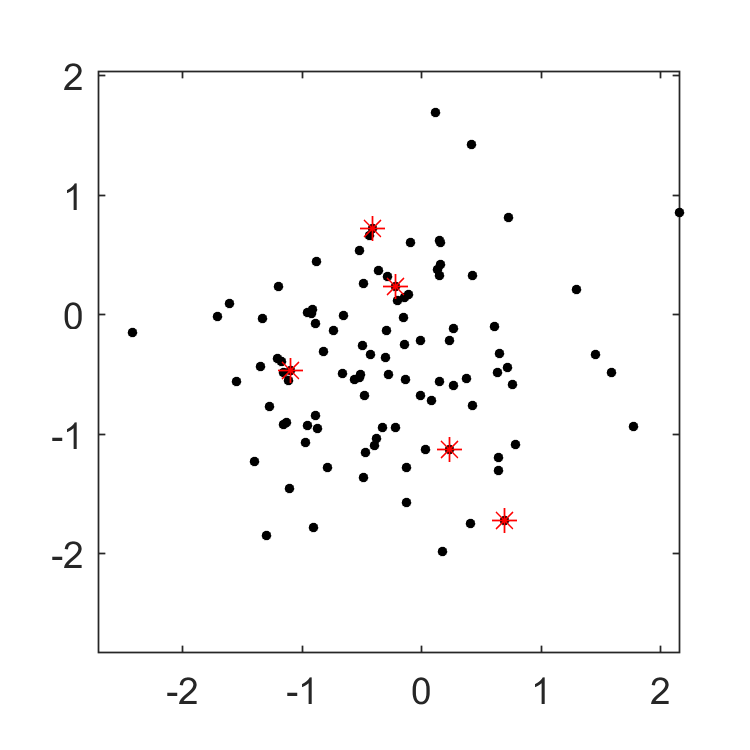} &
 \includegraphics[width=0.5\textwidth]{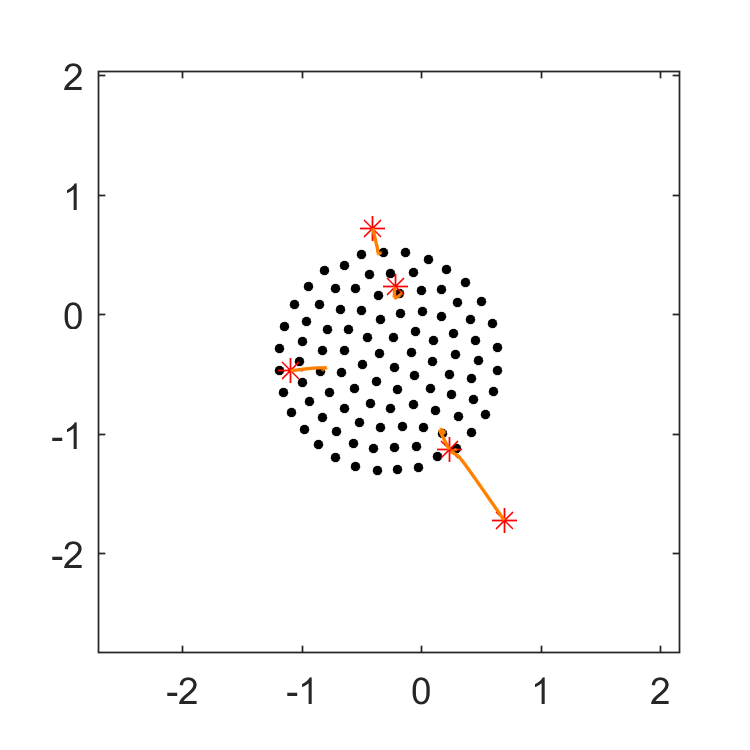} \\
 (a) & (b)
 \end{tabular}
\caption{Numerical simulation with $N=100$ particles for model \eqref{eqn:model} in the Euclidean plane with interaction potential given by \eqref{eqn:K-plane}-\eqref{eqn:G-plane}. (a) Random initial configuration of particles. (b) Starting from the configuration in (a), the model evolves into a uniform particle distribution supported on a disk -- see \eqref{eqn:sstate-R2}. The solid lines represent the trajectories of the particles indicated by stars in figure (a).}
\label{fig:plane}
\end{center}
\end{figure}


\paragraph{Motivation for the present work.} The ideas above can be extended to model \eqref{eqn:model} posed on arbitrary Riemannian manifolds $(M,g)$. Indeed, similar to \eqref{eqn:KLaplacian-R2}, consider an interaction potential that satisfies
\begin{equation}
\label{eqn:Mmotiv}
\Delta_M K(\bfx,\bfx') = -\delta_{\bfx'} + C,
\end{equation}
where $C>0$ a constant. Then, as in \eqref{eqn:div-v-R2}, we can calculate from \eqref{eqn:v}, \eqref{eqn:Mmotiv}, and the conservation of mass:
\begin{equation}
 \nabla_M \cdot \mathbf{v} =  \rho - C m.
\label{eqn:div-v-M}
\end{equation}

On a Riemannian manifold $(M,g)$, analogous to \eqref{eqn:div-expand}, we have
\begin{equation}
\label{eqn:div-expand-M}
\nabla_M \cdot (\rho \mathbf{v}) = \nabla_{\mathbf{v}} \rho + \rho\,  \nabla_M \cdot \mathbf{v},
\end{equation}
where $\nabla_{\mathbf{v}} \rho$ denotes the covariant derivative of $\rho$ along the tangent vector field $\mathbf{v}$. Then, one can proceed as in the Euclidean case (see \eqref{eqn:pde-expand} and \eqref{eqn:evol-charact}), and get from \eqref{eqn:pde}, \eqref{eqn:div-v-M} and \eqref{eqn:div-expand-M} an ODE for the evolution of $\rho(\Phi_t(\bfX),t)$:
\begin{equation}
\label{eqn:evol-charact-M}
\frac{D}{Dt} \rho =-\rho (\rho - C m).
\end{equation}
Here, the left-hand-side denotes the material derivative of the density along the flow $\Phi_t$ defined in \eqref{eqn:charact}:
\[
\frac{D}{Dt} \rho = \rho_t + \nabla_{\mathbf{v}} \rho.
\]

By \eqref{eqn:evol-charact-M}, densities along particle paths approach the constant value $Cm$. Therefore, as in the Euclidean case, equilibrium states have constant densities on their supports.  The primary purpose of the current paper is to apply the considerations above for two specific geometries: the two-dimensional sphere in $\R^3$ and the two-dimensional hyperbolic plane. In particular, we find interaction potentials that satisfy \eqref{eqn:Mmotiv}  and then investigate the dynamics and equilibria of the aggregation model \eqref{eqn:model} in these setups.


\section{Aggregation model on the sphere}
\label{sect:sphere}

In this section we set up the aggregation model \eqref{eqn:model} on the 2-dimensional sphere in $\R^3$. We then construct a certain interaction potential and investigate analytically and numerically the long time behaviour of the solutions.

\subsection{Model setup}
\label{subsect:setup-sphere}

Let $\bfx = x\ei + y\ej + z\ek$ denote the position of a particle in $\R^3$, where $\{\ei,\ej,\ek\}$ is the standard orthonormal Cartesian basis. 

Consider the unit sphere $S$ in $\R^3$, parametrized by spherical coordinates $(\h,\phi)$:
\begin{equation}
\label{eqn:sphere-param}
x =  \sin{\h} \cos{\phi}, \qquad y =  \sin{\h} \sin{\phi}, \qquad z = \cos{\h},
 \end{equation}
where $\h \in [0,\pi]$ is the angle from the positive $z$-axis, and $\phi \in [0, 2 \pi)$ denotes the polar angle in the $xy$-plane.

The tangent vectors at $(\h,\phi)$ on $S$ are
\begin{align*}
\bfx_\h &= \cos{\h}\cos{\phi} \, \ei + \cos{\h}\sin{\phi} \, \ej - \sin{\h}\, \ek, \\
\bfx_\phi &= -\sin{\h} \sin{\phi}\, \ei + \sin{\h} \cos{\phi}\, \ej,
\end{align*}
and  the first fundamental form (the metric) is given by:
\begin{equation}
\label{eqn:sphere-metric}
g_{11} = 1, \qquad g_{12}=0, \qquad g_{22} = \sin^2{\h}.
 \end{equation}
The metric matrix has determinant $|g| = \sin^2{\h}$ and its inverse has entries:
\begin{equation}
\label{eqn:sphere-imetric}
g^{11} = 1, \qquad g^{12}=0, \qquad g^{22} = \frac{1}{\sin^2{\h}}.
 \end{equation}



\paragraph{Gradient and Laplace-Beltrami operator on sphere.}
For a scalar function $f$ on the sphere, its surface gradient is given by (see \eqref{eqn:nabla-M}):
\begin{equation}
\label{eqn:nabla-sphere}
\nabla_S f = \frac{\p f }{\p \h}\, \bfx_\h+ \frac{1}{\sin^2 \h} \frac{\p f }{\p \phi}\, \bfx_\phi.
\end{equation}
Also, by \eqref{eqn:LB-M}, we have the expression of the Laplace-Beltrami operator in spherical coordinates:
\begin{equation}
\label{eqn:LB-sphere}
\Delta_S f = \frac{1}{\sin{\h}} \frac{\p}{\p \h}\left(\sin{\h}   \frac{\p f}{\p \h}   \right) +  \frac{1}{\sin^2{\h}} \frac{\p^2 f}{\p \phi^2}.
\end{equation}

\paragraph{Choice of interaction potential.}
We look for an interaction potential on the sphere that satisfies (see \eqref{eqn:Mmotiv}):
\begin{equation}
\label{eqn:Smotiv}
\Delta_S K(\bfx,\bfx') = -\delta_{\bfx'} + C,
\end{equation}
where $\Delta_S$ is the Laplace-Beltrami operator given by \eqref{eqn:LB-sphere}, and $C>0$ a constant. We remark from the start that since the sphere is a closed manifold, the constant $C$ cannot be arbitrary; by the solvability condition for \eqref{eqn:Smotiv}, $C = \frac{1}{4 \pi}$. 

Based on the observation above, we choose $K$ to be the Green's function in the generalized sense of $\Delta_S$ \cite{Kimura1999}, i.e., we set
\begin{equation}
\label{eqn:K-sphere}
K(\bfx,\bfx') = G_S(\bfx;\bfx'),
\end{equation}
where
\begin{equation}
\label{eqn:G-sphere}
G_S(\bfx; \bfx') = -\frac{1}{2 \pi} \log{\sin \left( \frac{d(\bfx,\bfx')}{2}\right)}.
\end{equation}
Here, $d(\bfx, \bfx')$ denotes the spherical distance between the points $\bfx$ and $\bfx'$. It is a simple exercise to check indeed that $G_S$ satisfies:
\begin{equation}
\label{eqn:SLaplacian}
\Delta_S \, G_S(\bfx; \bfx') = -\delta_{\bfx'} + \frac{1}{4 \pi}.
\end{equation}


In local coordinates $(\h,\phi)$, $(\h',\phi')$ (corresponding to $\bfx$ and $\bfx'$, respectively), the spherical distance is given by the law of cosines on sphere:
\begin{equation}
\cos{d(\bfx, \bfx')} = \cos{\h}\cos{\h'} + \sin{\h}\sin{\h'}\cos{(\phi - \phi')}.
\end{equation}
Also, using local coordinates in \eqref{eqn:conv} we have:
\begin{equation}
\label{eqn:conv-S}
K * \rho (\h,\phi) = \int_{0}^{2\pi} \int_{0}^{\pi} K(\h, \phi, \h', \phi') \rho(\h',\phi')  \sin{\h'} d\h' d\phi',
\end{equation}
where by an abuse of notation, we used $K(\h, \phi, \h', \phi')$ for $K(\bfx,\bfx')$.

The interaction potential $K$ is {\em purely} repulsive. Indeed, since $K(\bfx,\bfx')$ only depends on the geodesic distance between the points, it is enough to take $\bfx'$ as the North pole of the sphere ($(0,0)$ in local coordinates). Then, $d(\bfx,\bfx') = \h$ and by \eqref{eqn:v}, \eqref{eqn:nabla-sphere} and the choice of $K$ in \eqref{eqn:K-sphere}-\eqref{eqn:G-sphere}, upon interacting with the North pole, the point $\bfx$ moves in the direction 
\begin{equation}
\label{eqn:repel-sphere}
- \nabla_S K(\h,\phi,0,0) = \underbrace{\frac{1}{4 \pi} \operatorname{cotan}\left( \frac{\h}{2} \right)}_{>0 \text{ (repulsion)} } \bfx_\h.
\end{equation}
Hence,  all points on $S$ sense a repelling force (except the South pole for which the expression above vanishes) -- see Figure \ref{fig:att-rep}(a) for an illustration.

\begin{figure}[htb]
 \begin{center}
 \begin{tabular}{cc}
 \includegraphics[width=0.5\textwidth]{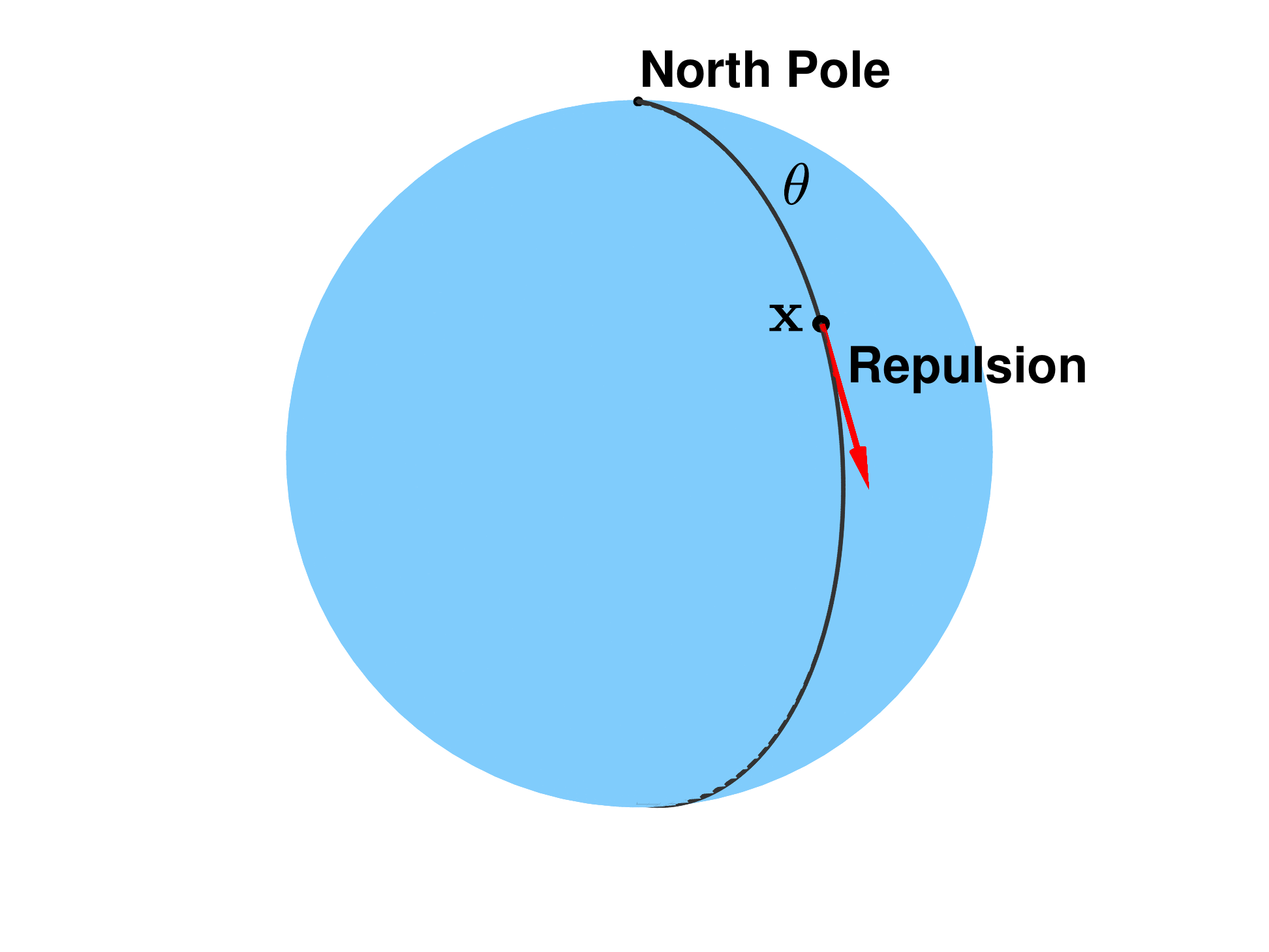} &
 \includegraphics[width=0.5\textwidth]{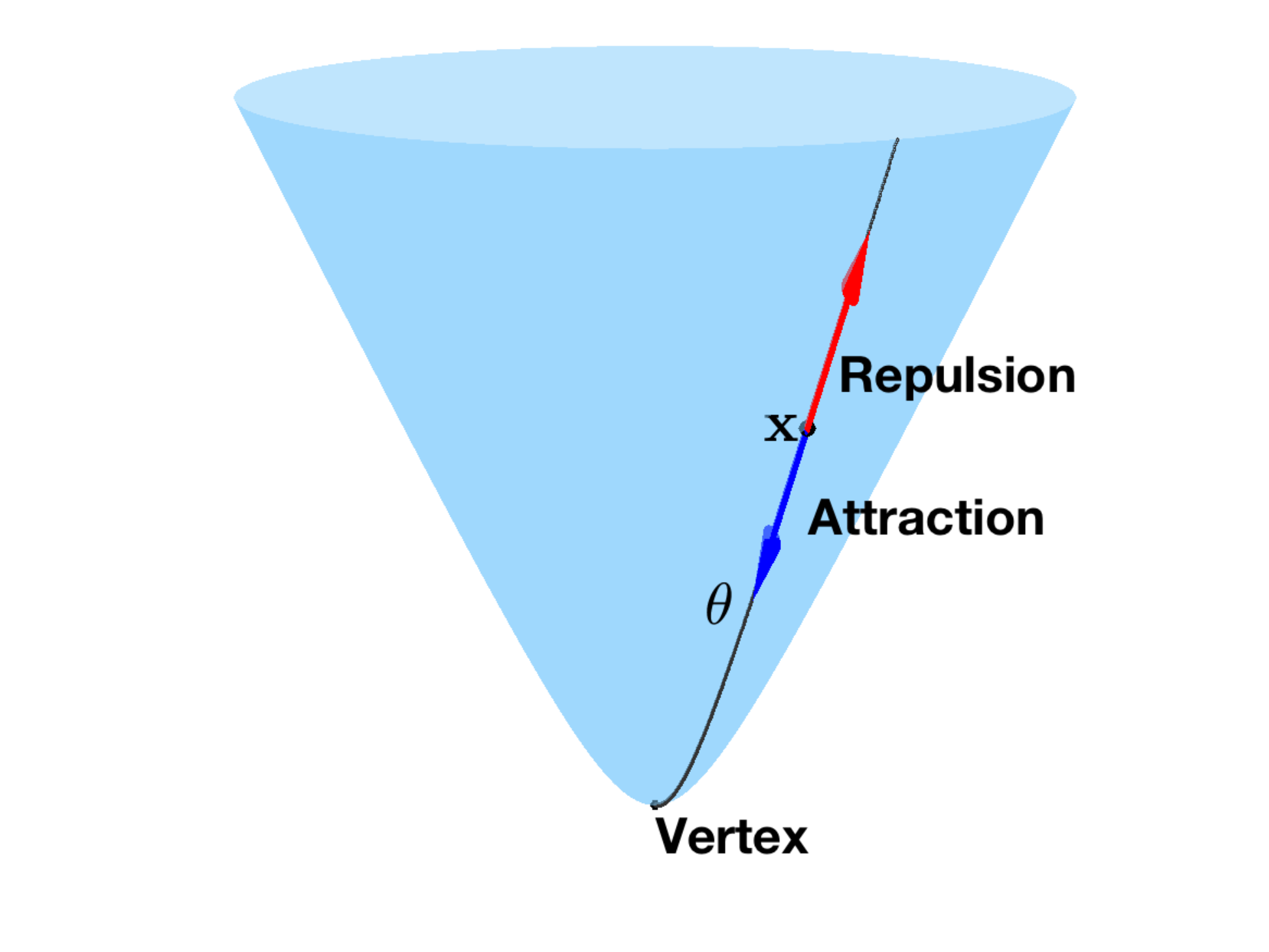} \\
 (a) & (b)
 \end{tabular}
\caption{(a) The interaction potential \eqref{eqn:K-sphere} on $S$ is purely repulsive. The figure indicates how a generic point senses a repelling force from the North pole -- see \eqref{eqn:repel-sphere}. (b) The interaction potential \eqref{eqn:K-hyper} on $H$ is attractive-repulsive, as the two terms in the right-hand-side of \eqref{eqn:K-hyper} have competing effects. Shown in the figure is a generic point interacting with the vertex of the hyperboloid -- see \eqref{eqn:repel-hyper}.}
\label{fig:att-rep}
\end{center}
\end{figure}


\subsection{Asymptotic convergence to equilibrium} 
\label{subsect:asympt-sphere}
For simplicity, set the total mass $m=1$. From \eqref{eqn:v}, \eqref{eqn:K-sphere}, \eqref{eqn:SLaplacian} and the conservation of mass, we find (see also \eqref{eqn:div-v-M}):
\begin{equation}
\label{eqn:div-v-sphere}
\nabla_S \cdot \mathbf{v} 
= \rho - \frac{1}{4 \pi}. 
\end{equation}

Hence, along the flow $\Phi_t$ generated by the vector field $\mathbf{v}$ on $S$, the density $\rho(\Phi_t(\bfX),t)$ satisfies (see \eqref{eqn:evol-charact-M}):
\begin{equation}
\label{eqn:rho-flow-sphere}
\frac{D}{Dt} \rho =-\rho \left(\rho - \frac{1}{4\pi}\right).
\end{equation}

As noted above, along particle paths, densities $\rho(\Phi_t(\bfX),t)$ approach a constant value (here, the constant density at equilibrium is $\frac{1}{4 \pi}$). The ODE \eqref{eqn:rho-flow-sphere} can be solved exactly in fact, with solution:
\begin{equation}
\rho(\Phi_t(\bfX),t)= \frac{1}{ 4 \pi +\left(  \frac{1}{\rho_{0}(\bfX)}- 4 \pi \right) e^{-\frac{t}{4 \pi}}},
\label{eqn:rho-c}
\end{equation}
where $\rho_{0}$ is the initial density.

Also, from \eqref{eqn:continuity-alt} and \eqref{eqn:rho-c}, one can find an exact explicit expression for the Jacobian of the flow map:
\begin{equation}
\label{eqn:J}
J(\bfX,t) = 4 \pi \rho_{0}(\bfX) + \left(  1 -
4 \pi \rho_{0}(\bfX) \right)  e^{-\frac{t}{4\pi}},
\end{equation}
where for notational convenience we have denoted  
\begin{equation}
\label{eqn:J-particle}
J(\bfX,t) := J(\Phi_t)(\bfX).
\end{equation}
Note that
\[
J(\bfX,t) >0 \qquad\text{ for all } t,
\]
guaranteeing that the particle map is invertible for as long as it exists.

\begin{thm}[Global attractor for sphere]
\label{thm:sphere}
Consider model \eqref{eqn:model} set up on the sphere $S$, with the interaction potential $K$ given by \eqref{eqn:K-sphere}-\eqref{eqn:G-sphere}. Assume that the model has a global in time $C^1$ solution $\rho(\bfx,t)$ and that the flow map $\Phi_t:S \to S$ is $C^1$ and invertible for all $t>0$. Then, solutions $\rho(\bfx,t)$ approach asymptotically, as $t\to \infty$, a constant equilibrium density supported over the entire sphere.
\end{thm}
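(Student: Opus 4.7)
The strategy is to lift the explicit Lagrangian convergence already established in the text into $L^1$ convergence of $\rho_t$ on the sphere. Observe that the closed-form formulas \eqref{eqn:rho-c} and \eqref{eqn:J} immediately give, as $t \to \infty$, pointwise convergence $\rho(\Phi_t(\bfX), t) \to \frac{1}{4\pi}$ for $\bfX$ with $\rho_0(\bfX) > 0$, and $J(\bfX, t) \to 4\pi \rho_0(\bfX)$. Furthermore, \eqref{eqn:J} can be rewritten as the convex combination $J(\bfX, t) = (1-e^{-t/(4\pi)})\,4\pi\rho_0(\bfX) + e^{-t/(4\pi)}\cdot 1$, which yields the uniform pointwise bound $J(\bfX, t) \leq \max(1, 4\pi \|\rho_0\|_{\infty})$; this bound will justify applications of dominated convergence. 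The goal is to package these Lagrangian facts into a statement about $\rho_t$ viewed as a function on $S$.

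My approach would proceed in three steps. First, set $\bar{\rho} \equiv \frac{1}{4\pi}$ (the only constant density of total mass $1$ on $S$) and split
\[
\|\rho_t - \bar{\rho}\|_{L^1(S)} = \int_{\Phi_t(\mathrm{supp}\,\rho_0)} |\rho_t - \bar{\rho}|\, d\mu + \bar{\rho}\, \mathrm{Area}\bigl(S \setminus \Phi_t(\mathrm{supp}\,\rho_0)\bigr),
\]
using that $\Phi_t$ is a diffeomorphism of $S$ by the hypotheses, so that $\mathrm{supp}\,\rho_t = \Phi_t(\mathrm{supp}\,\rho_0)$.

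Second, for the bulk integral, pull back by $\bfx = \Phi_t(\bfX)$ and invoke \eqref{eqn:continuity-alt} in the form $\rho_t(\Phi_t(\bfX))\,J(\bfX,t) = \rho_0(\bfX)$ to rewrite it as $\int_{\mathrm{supp}\,\rho_0} \bigl|\rho_0(\bfX) - J(\bfX, t)/(4\pi)\bigr|\, d\mu(\bfX)$, whose integrand vanishes pointwise and which tends to zero by dominated convergence using the bound on $J$. For the complementary term, the change-of-variables formula gives $\mathrm{Area}(\Phi_t(\mathrm{supp}\,\rho_0)) = \int_{\mathrm{supp}\,\rho_0} J\, d\mu \to 4\pi \int_{\mathrm{supp}\,\rho_0} \rho_0\, d\mu = 4\pi$ by the same dominated-convergence argument, so its complement has vanishing area. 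Combining these yields $\rho_t \to \bar{\rho}$ in $L^1(S)$, which is the desired asymptotic convergence.

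The main obstacle is conceptual rather than technical: when $\rho_0$ has compact support strictly contained in $S$, the support of $\rho_t$ remains the diffeomorphic image $\Phi_t(\mathrm{supp}\,\rho_0)$ for every finite $t$, yet in the limit the equilibrium must cover the entire sphere. The resolution is built into \eqref{eqn:J}: on regions where $\rho_0(\bfX)$ is small, $J(\bfX, t)$ asymptotes to the small value $4\pi \rho_0(\bfX)$, so the flow contracts low-density regions to vanishing area while correspondingly expanding the rest. Hence the apparent tension is a clean consequence of the Jacobian's explicit form, and no finer dynamical argument (such as tracking individual characteristics geometrically) seems needed beyond the dominated-convergence bookkeeping described above.
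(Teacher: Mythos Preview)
Your argument is correct. Both you and the paper start from the same ingredients---the explicit formulas \eqref{eqn:rho-c} and \eqref{eqn:J} for the density along characteristics and for the Jacobian---but the routes diverge from there. The paper's proof is essentially a two-line heuristic: since $\rho(\Phi_t(\bfX),t)\to\frac{1}{4\pi}$ along every characteristic starting in $\mathrm{supp}\,\rho_0$, equilibria must have constant density $\frac{1}{4\pi}$ on their support, and conservation of mass then forces the support to be the whole sphere (up to a null set). No topology of convergence is specified, and the ``support fills out the sphere'' step is argued only at the level of the limiting equilibrium.

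Your approach upgrades this to an honest $L^1$ statement: you split $\|\rho_t-\bar\rho\|_{L^1}$ into a bulk piece over $\Phi_t(\mathrm{supp}\,\rho_0)$ and a complementary area term, pull the bulk piece back via change of variables, and kill both with dominated convergence using the uniform bound on $J$. This is genuinely more informative---it identifies the mode of convergence and quantifies how the support spreads (the area of the complement tends to zero because $\int J\,d\mu\to 4\pi$). The price is a little extra bookkeeping; the benefit is that the conclusion is no longer ambiguous about what ``approach asymptotically'' means.
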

\begin{proof}
The proof is essentially provided by the considerations above. From \eqref{eqn:rho-flow-sphere} and \eqref{eqn:rho-c}, along particle paths that originate from the support of $\rho_0$ (i.e., $\rho_0(\bfX) \neq 0$), one has
\[
\lim_{t \to \infty} \rho(\Phi_t(\bfX),t) = \frac{1}{4 \pi}.
\]
Consequently,  equilibria must have constant density $\frac{1}{4 \pi}$ on their support. And since the total mass $m=1$ is conserved through the time evolution, then necessarily the support of the equilibrium state must be the entire sphere (up to a zero measure set). This shows that
\begin{equation}
\label{eqn:sstate-sphere}
\bar{\rho}_S(\bfx) = \frac{1}{4 \pi} \qquad \qquad \text{ for all } \bfx \in S,
\end{equation}
is a global attractor.
\end{proof}


\paragraph{Symmetric initial density.} When the initial density is symmetric with respect to rotations about a North-South axis, we can find explicit expressions for the evolution of the particle paths, and hence, show directly the result in Theorem \ref{thm:sphere}. While the calculation is interesting in itself, it is key to investigating attractors on the hyperbolic plane (Section \ref{subsect:asympt-hyper}).

For simplicity, assume the initial density is symmetric with respect to rotations about the $z$-axis. In spherical coordinates, this amounts to considering an initial density $\rho_0$ that depends only on $\h$, but not on $\phi$; by an abuse of notation we will denote $\rho_0(\bfX)$ by $\rho_0(\h)$. Note that a symmetric initial density results in a symmetric solution for all times. Indeed, for the initial density $\rho_0(\h)$, evaluate the velocity at a generic point $(\h,\phi)$ using \eqref{eqn:v} and \eqref{eqn:conv-S}. One finds that the velocity only depends on $\h$ and also, that it points in the direction $\bfx_\h$, i.e., the point moves along the meridian $\phi=\text{const}$. The same argument applies to a symmetric density at an arbitrary time instance, hence the symmetry is preserved through time evolution.

Note that in spherical coordinates (see \eqref{eqn:vol-form}), one has 
\begin{equation}
\label{eqn:mu-sphere}
\mu = \sin \h \, d\h \wedge d \phi.
\end{equation}
By symmetry,  the flow map in local coordinates takes a point $\bfX \in S$ of coordinates $(\h,\phi$) into $\Phi_t(\bfX) \in S$ of coordinates $(\lambda(\h,t),\phi)$, that is, the coordinate $\phi$ remains constant and the coordinate $\h$ maps into $\lambda(\h,t)$, for some function $\lambda$.

Then, by \eqref{eqn:J-flow} and \eqref{eqn:mu-sphere} (see also notation \eqref{eqn:J-particle}), we find in local coordinates that: 
\begin{equation}
\label{eqn:J-lambda}
 \sin{(\lambda(\h,t))} \lambda_\h(\h,t) = J(\bfX,t) \, {\sin \h}.
\end{equation}
Note that, as expected, the Jacobian depends only on $\h$ and not on $\phi$, as the latter coordinate remains constant along the flow. 
Now, we get from \eqref{eqn:J} and \eqref{eqn:J-lambda} the following differential equation for $\lambda(\h,t)$:
\[
\sin{(\lambda(\h,t))} \lambda_\h(\h,t) =  \left( 4 \pi \rho_{0}(\h) + \left(  1 -
4 \pi \rho_{0}(\h) \right)  e^{-\frac{t}{4\pi}}\right) \sin \h.
\]
The ODE above can be solved by elementary methods. We find:
\begin{equation}
\label{eqn:flow-exact-sphere}
\cos (\lambda(\h,t)) = 1 - 4 \pi \int_0^\h \rho_{0}(\xi) \sin \xi \, d\xi -  e^{-\frac{t}{4\pi}} \int_0^\h \left(  1 - 4 \pi \rho_{0}(\xi) \right) \sin \xi \, d\xi,
\end{equation}
where we used the fact that by symmetry, the North pole ($\h=0$) remains fixed (i.e., $\lambda(0,t)=0$ for all $t$). Equation \eqref{eqn:flow-exact-sphere} provides an exact explicit expression for the flow map $(\h,\phi) \to (\lambda(\h,t),\phi)$ in spherical coordinates. 

To find the asymptotic behaviour of the particle trajectories, one can send $t \to \infty$ in equation \eqref{eqn:flow-exact-sphere}. We find:
\begin{equation}
\label{eqn:cosLambda}
\cos {\Lambda_\h} = 1- 4 \pi \int_0^\h \rho_{0}(\xi) \sin \xi \, d\xi.
\end{equation}
where 
\[
\Lambda_\h = \lim_{t \to \infty} \lambda(\h,t).
\]
Note that the (conserved) unit mass can be written in spherical coordinates as:
\begin{equation}
\label{eqn:m1spherical}
1 = \int_0^{2 \pi} \int_0^\pi \rho_{0}(\h) \sin \h \, d\h d \phi.
\end{equation}

Now consider a symmetric domain $ \theta_0 \leq \theta \leq \theta_1$ that contains the support of $\rho_0$. Then, for any particle trajectory that originates from $\theta<\theta_0$ (outside the support of $\rho_0$), we have by \eqref{eqn:cosLambda} that $\Lambda_\h = 0$. In other words, all trajectories starting from $\theta<\theta_0$ approach the North pole as $t \to \infty$. Similarly, for any particle trajectory that originates from $\theta>\theta_1$ (also outside the initial support), one finds by \eqref{eqn:cosLambda}  and \eqref{eqn:m1spherical} that $\Lambda_\h = \pi$, so the trajectory approaches the South pole as $t \to \infty$. On the other hand, by \eqref{eqn:cosLambda}, $\Lambda_\theta$ is monotonic and continuous in $\theta$, so trajectories starting from inside the initial support will spread over the entire sphere, as expected by the result in Theorem \ref{thm:sphere} -- see Figure \ref{fig:sphere-sym} for a numerical illustration using a particle method.

\begin{rmk}
\label{rmk:symmetry-sphere} The considerations above hold for initial densities that are symmetric with respect to rotations about any North-South axis of the sphere. The same exact expressions of the particle trajectories hold, upon a rotation of the coordinate axes.
\end{rmk}

\begin{figure}[htb]
 \begin{center}
 \begin{tabular}{cc}
 \includegraphics[width=0.5\textwidth]{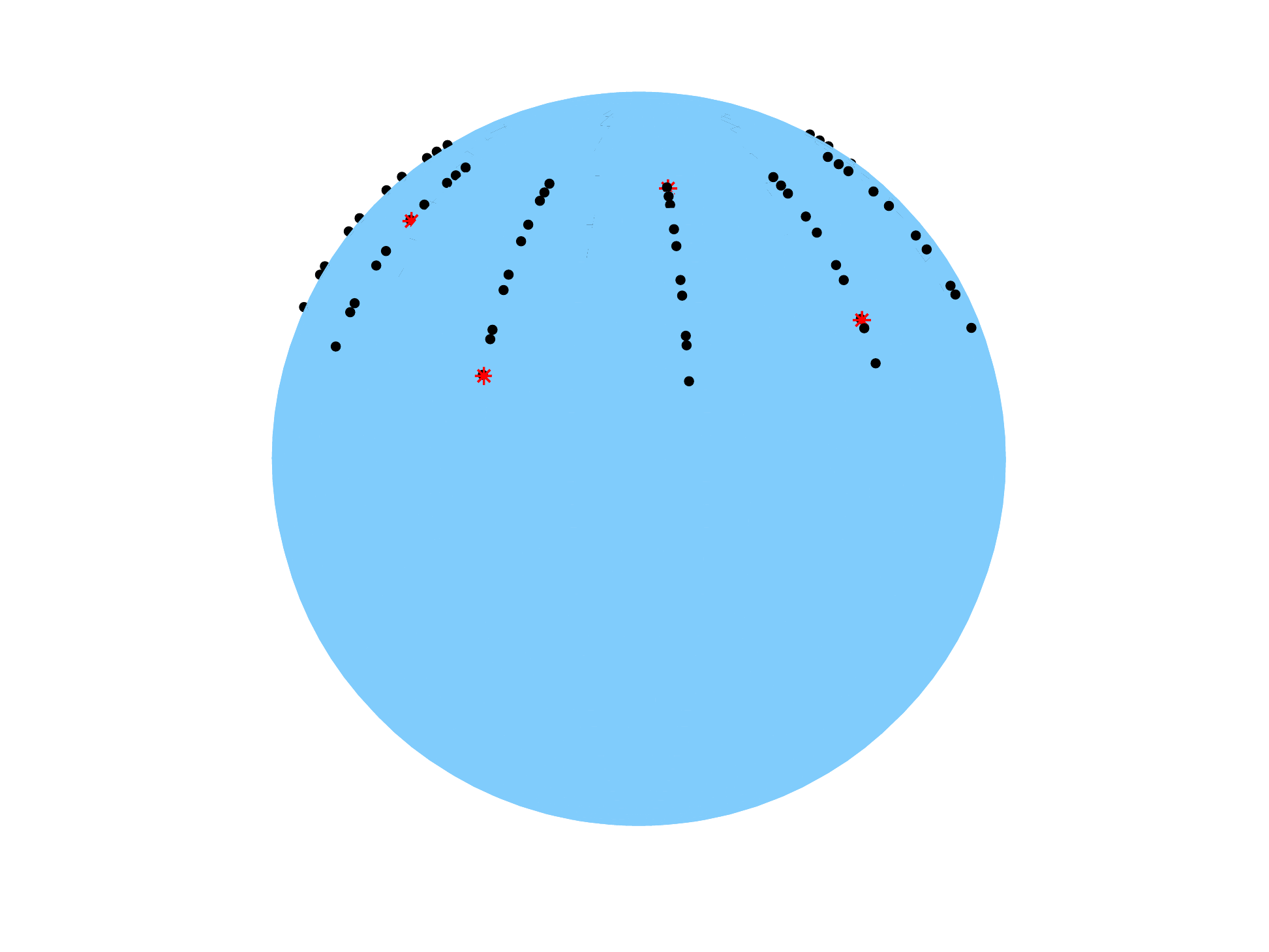}  &
 \includegraphics[width=0.5\textwidth]{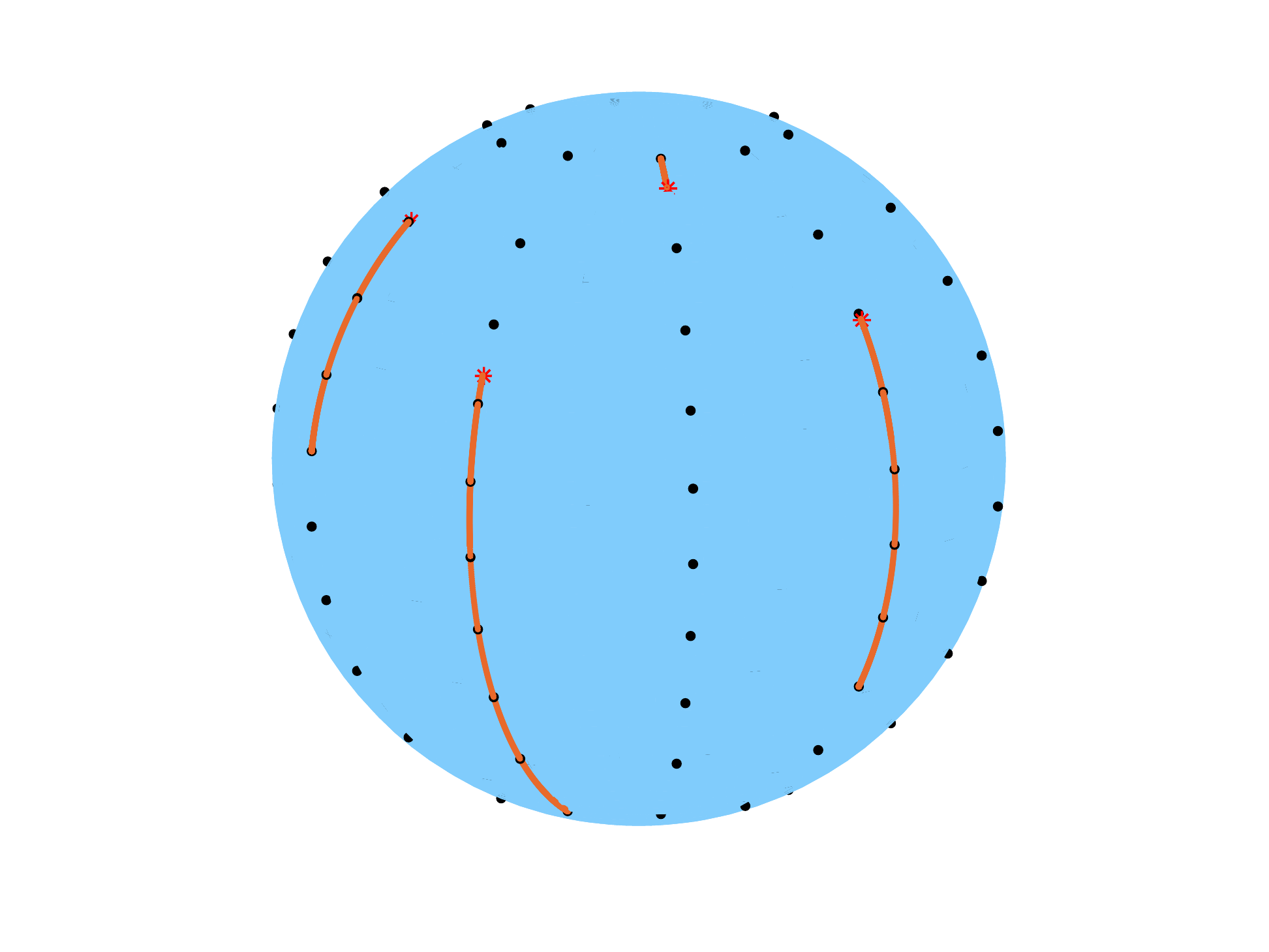} \\
 (a) & (b)
 \end{tabular}
\caption{ Numerical simulation with $N=100$ particles for model \eqref{eqn:model} on $S$ with interaction potential given by \eqref{eqn:K-sphere}-\eqref{eqn:G-sphere}. (a) Symmetric initial configuration on $S$, with $\theta$ coordinates generated randomly in the interval $\left( \frac{\pi}{8}, \frac{3 \pi}{8} \right) $.  (b) The configuration remains symmetric for all times and evolves into a uniform particle distribution supported over the entire sphere -- see \eqref{eqn:sstate-sphere}. The solid lines represent the trajectories of the particles indicated by stars in figure (a).}
\label{fig:sphere-sym}
\end{center}
\end{figure}


\subsection{Numerical results}
\label{subsubsect:numerics-sphere}

\paragraph{Numerical method.} We will use the discrete particle system associated to the macroscopic model \eqref{eqn:model}. 
Set the total mass $m=1$ and consider $N$ particles of equal mass (so each particle has mass $\frac{1}{N}$). Let $\bfx_i(t)$ represent the location in $\R^3$ of the $i$-th particle.  Equation \eqref{eqn:v} can be written in discrete form to express the velocity of particle $\bfx_i$ in terms of the locations $\bfx_j$ ($j \neq i$) of the other particles. Hence, one arrives at the discrete particle system:
\begin{equation}
\label{eqn:discrete}
\frac{d \bfx_i}{dt} = -\frac{1}{N} \sum_{\substack{j = 1 \\ j \ne i}}^N \nabla_{S,i} K(\bfx_i,\bfx_j), \qquad i =1,\dots,N.
\end{equation}

In Euclidean settings, the rigorous mean-field limit of the particle system \eqref{eqn:discrete} was established in \cite{CaChHa2014}. Specifically, it was shown that the empirical distribution associated to \eqref{eqn:discrete}, i.e., $\mu_N(t) = \frac{1}{N} \sum_{i=1}^N \delta_{\bfx_i(t)}$, converges weak-$*$ as measures to a weak solution $\rho(t)$ of the macroscopic model \eqref{eqn:model}. The result holds for a general class of potentials, including repulsive-attractive potentials that have a (strictly better than Newtonian) singularity at origin.

We write both sides of \eqref{eqn:discrete} in the local spherical basis at $\bfx_i$. By chain rule, the left-hand-side can be expanded as:
\begin{equation*}
\frac{d\bfx_i}{dt} = \frac{\p \bfx_i}{\p \h_i}\frac{d\h_i}{dt} + \frac{\p \bfx_i}{\p \phi_i}\frac{d\phi_i}{dt},
\end{equation*}
while for the right-hand-side we use \eqref{eqn:nabla-sphere} to get:
\begin{equation*}
\nabla_{S,i} K (\bfx_i,\bfx_j) = \frac{\p K}{\p \hi}  \frac{\p \bfx_i}{\p \h_i} + \frac{1}{\sin^2{\hi}} \frac{\p K}{\p \phi_i}  \frac{\p \bfx_i}{\p \phi_i} .
\end{equation*}
By matching the coefficients on each side, we obtain the following ODE system for the spherical coordinates:
\begin{equation}
\label{eqn:ode-sphere}
\begin{aligned}
\frac{d \hi}{dt} &= -\frac{1}{N}\sum_{\substack{j=1 \\ j \neq i}}^N  \frac{\p K}{\p \hi} (\h_i,\phi_i,\h_j,\phi_j), \qquad i=1,\dots,N\\
\frac{d \phi_i}{dt} &=-\frac{1}{N}\sum_{\substack{j=1 \\ j \neq i}}^N { \frac{1}{\sin^2{\hi}} \frac{\p K}{\p \phi_i}} (\h_i,\phi_i,\h_j,\phi_j), \qquad i = 1,\dots,N.
\end{aligned}
\end{equation}

The results presented below are obtained by solving numerically the particle system \eqref{eqn:ode-sphere} with the classical (4th order) Runge-Kutta method. We note that the particle method is very suitable here, as it complements the Lagrangian approach from Section \ref{subsect:asympt-sphere}. Simulating the discrete system is in fact the main numerical tool used for the model in $\R^n$.


\paragraph{Numerical simulations on sphere.} We use the particle method to confirm the theoretical findings from Section \ref{subsect:asympt-sphere}. First, we consider an initial  configuration which is symmetric about the $z$-axis (Figure  \ref{fig:sphere-sym}(a)). To generate this initial density, we first placed particles randomly along a given meridian ($\phi=\text{const.}$, $ \frac{\pi}{8} < \theta < \frac{3 \pi}{8}$), and then rotated the configuration about the $z$-axis. We find indeed that particles evolve along meridians (see the trajectories of the particles indicated by stars) into a uniform particle distribution supported over the entire sphere. 

The equilibrium \eqref{eqn:sstate-sphere} is a global attractor (Theorem \ref{thm:sphere}). Figure \ref{fig:sphere} shows a numerical validation of this result: a random initial particle distribution (Figure \ref{fig:sphere}(a)) evolves into the expected equilibrium state (Figure \ref{fig:sphere}(b)). The $\theta$ and $\phi$ coordinates of the initial configuration were drawn randomly from the intervals $ \left( \frac{\pi}{8}, \frac{3 \pi}{8} \right)$ and $ \left(0, \frac{\pi}{2} \right)$, respectively. The solid lines in Figure \ref{fig:sphere}(b) represent several individual trajectories,  corresponding to the particles indicated by stars.

\begin{figure}[htb]
 \begin{center}
 \begin{tabular}{cc}
 \includegraphics[width=0.5\textwidth]{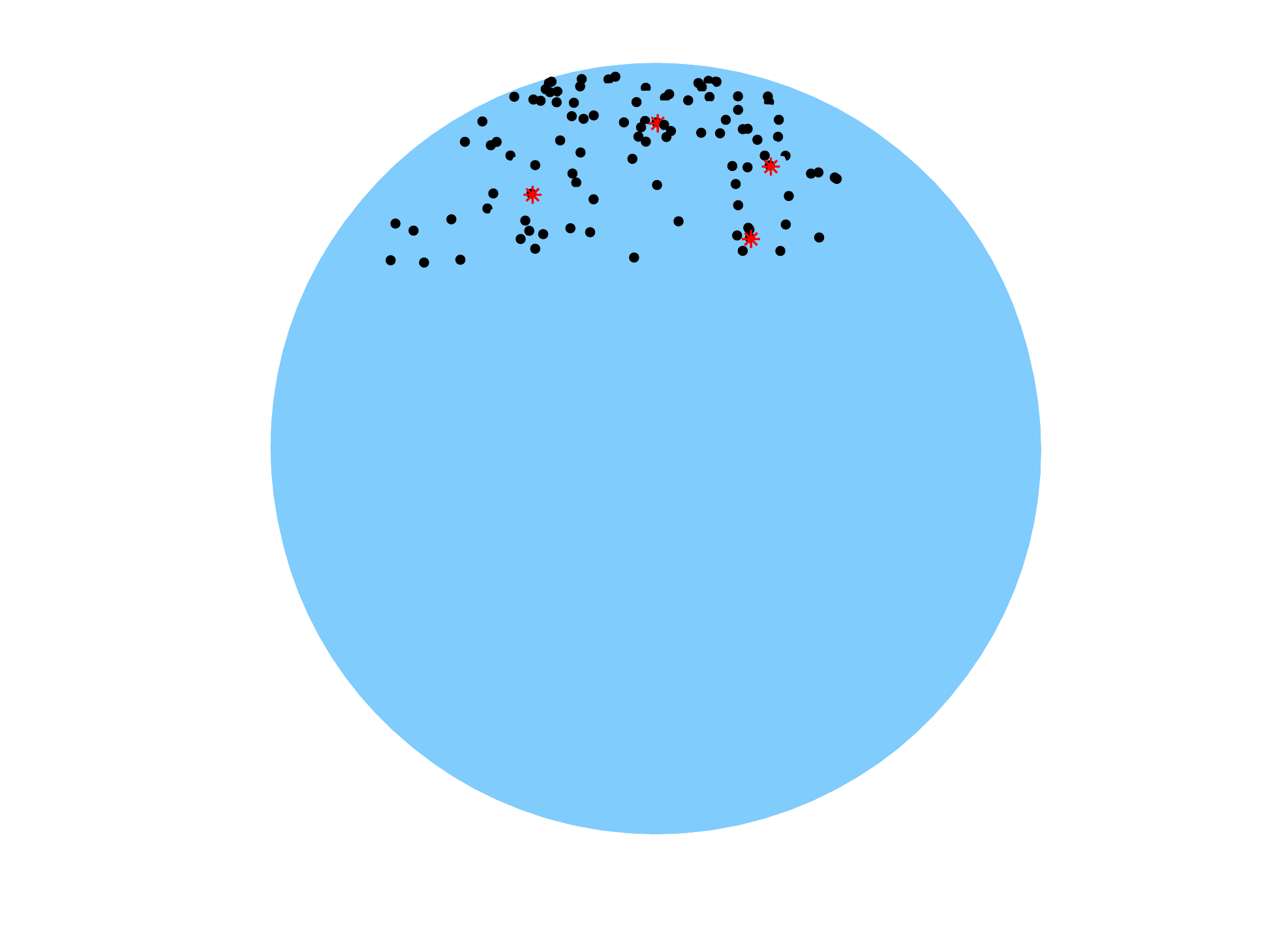} &
 \includegraphics[width=0.5\textwidth]{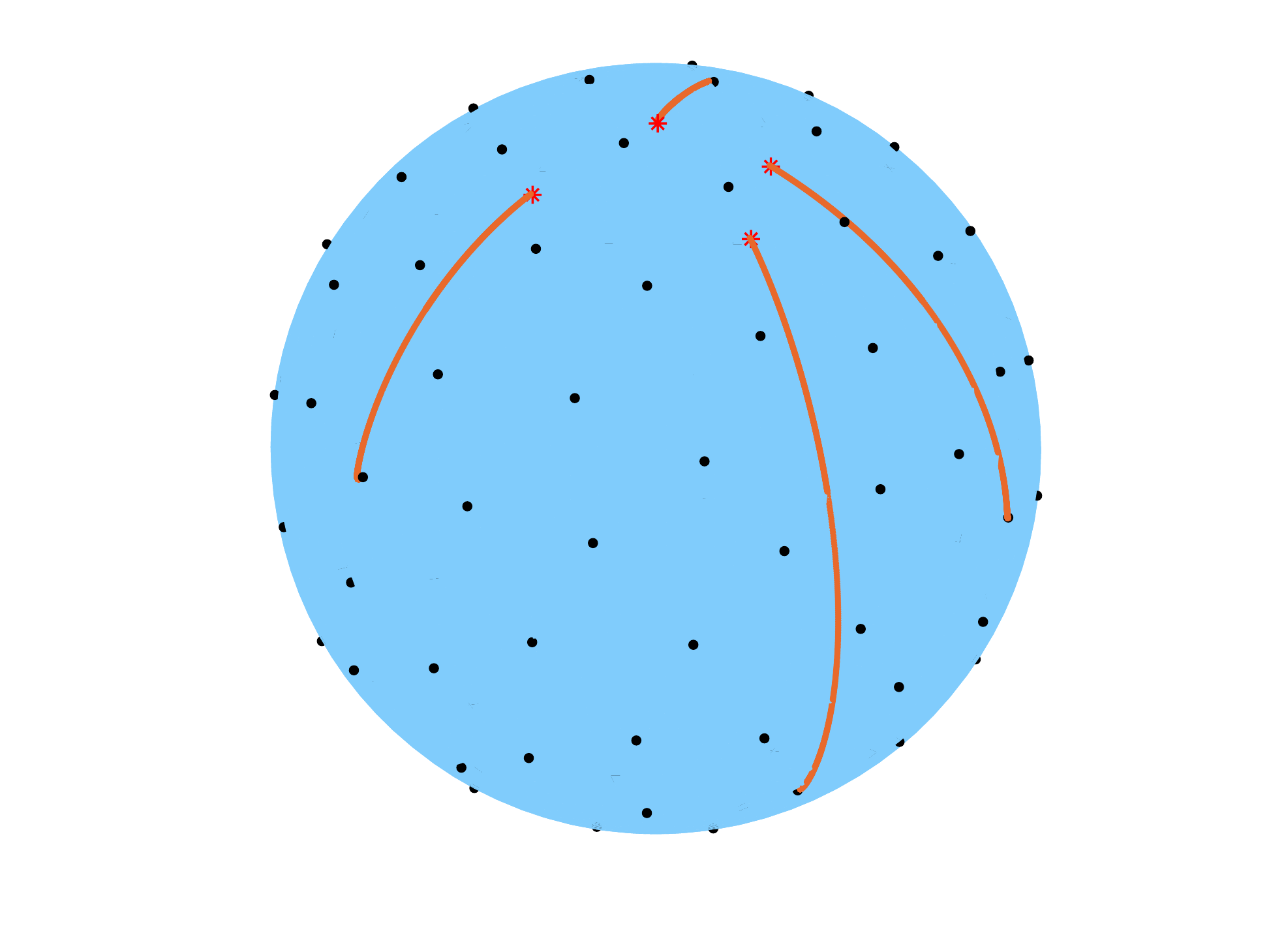} \\
 (a) & (b)
 \end{tabular}
\caption{Numerical simulation with $N=100$ particles for model \eqref{eqn:model} on $S$ with interaction potential given by \eqref{eqn:K-sphere}-\eqref{eqn:G-sphere}. (a) Random initial configuration on $S$, with coordinates $\theta$ and $\phi$ generated randomly in $\left( \frac{\pi}{8}, \frac{3 \pi}{8} \right)$ and $ \left(0, \frac{\pi}{2} \right)$, respectively.  (b) The configuration in (a) evolves into a uniform particle distribution supported over the entire sphere -- see Theorem \ref{thm:sphere}. The solid lines represent the trajectories of the particles indicated by stars in figure (a).}
\label{fig:sphere}
\end{center}
\end{figure}


\section{The aggregation model on the hyperbolic space}
\label{sect:hyperboloid}

In this section we set up the aggregation model on the 2-dimensional hyperbolic space and investigate the dynamics of its solutions for interaction potentials that lead to equilibria of constant densities. 

\subsection{Model setup}
\label{subsect:setup-hyper}

We use the hyperboloid model of the two dimensional hyperbolic space \cite{CaFlKePa1997}. Specifically, we consider the upper sheet of the two-sheeted hyperboloid:
\[
H = \{ (x,y,z) \in \R^3 \mid x^2 + y^2 - z^2 = -1 \text{ and } z>0\},
\]
embedded in $\R^3$ endowed with the Minkowski inner product 
\[
\langle \bfx, \bfx'\rangle = x x' + y y' - z z'.
\]
Here, $\bfx = x \ei + y\ej + z\ek$ and $\bfx' = x' \ei + y' \ej + z' \ek$.

The hyperboloid $H$ (in the pseudo-Euclidean space $\R^3$) can be parametrized as:
\begin{equation}
\label{eqn:hyper-param}
x =  \sinh{\h} \cos{\phi}, \qquad y =  \sinh{\h} \sin{\phi}, \qquad z = \cosh{\h}.
 \end{equation}
where $\h \in [0,\infty)$ can be identified with the hyperbolic distance from the vertex of the hyperboloid (i.e., point (0,0,1)) and $\phi \in [0, 2 \pi)$ denotes the polar angle in the $xy$-plane.

The tangent vectors at $(\h,\phi)$ on $H$ are
\begin{align*}
\bfx_\h &= \cosh{\h}\cos{\phi} \, \ei + \cosh{\h}\sin{\phi} \, \ej + \sinh{\h}\, \ek, \\
\bfx_\phi &= -\sinh{\h} \sin{\phi}\, \ei + \sinh{\h} \cos{\phi}\, \ej,
\end{align*}
and the metric coefficients are given by:
\begin{equation}
\label{eqn:hyper-metric}
g_{11} = 1, \qquad g_{12}=0, \qquad g_{22} = \sinh^2{\h}.
 \end{equation}
The determinant of the metric is $|g| = \sinh^2{\h}$ and its inverse given by:
\begin{equation}
\label{eqn:hyper-imetric}
g^{11} = 1, \qquad g^{12}=0, \qquad g^{22} = \frac{1}{\sinh^2{\h}}.
 \end{equation}

The hyperbolic distance $d(\bfx,\bfx')$ between two points $\bfx$ and $\bfx'$ on $H$, of local coordinates $(\h,\phi)$ and $(\h',\phi')$, respectively,  can be found from the hyperbolic law of cosines:
\begin{equation}
\label{eqn:dist-H}
\cosh{d(\bfx,\bfx')} =  \cosh{\h}\cosh{\h'} - \sinh{\h}\sinh{\h'}\cos{(\phi - \phi')}.
\end{equation}



\paragraph{Gradient and Laplace-Beltrami operators on $\mathbf{H}$.} From \eqref{eqn:nabla-M} and \eqref{eqn:LB-M} we find:
\begin{equation}
\label{eqn:nabla-H}
\nabla_H f = \frac{\p f}{\p \h}\,  \bfx_\h + \frac{1}{\sinh^2{\h}} \frac{\p f}{\p \phi}\,  \bfx_\phi,
\end{equation}
and
\begin{equation}
\label{eqn:LB-H}
\Delta_H f = \frac{1}{\sinh{\h}} \frac{\p}{\p \h}\left(\sinh{\h}   \frac{\p f}{\p \h}   \right) +  \frac{1}{\sinh^2{\h}} \frac{\p^2 f}{\p \phi^2}. 
\end{equation}


\paragraph{Choice of interaction potential.} The Green's function of the negative Laplacian on $H$ is given by (see \cite{Kimura1999}):
\begin{equation}
\label{eqn:Green-hyper}
G_H(\bfx;\bfx') = - \frac{1}{2\pi} \log  { \tanh{ \left( \frac{d(\bfx,\bfx')}{2} \right)}} .
\end{equation}
Indeed, using \eqref{eqn:LB-H} and \eqref{eqn:dist-H}, one can check that $-\Delta_H G(\bfx;\bfx') = \delta_{\bfx'}$ holds in the sense of distributions.

Motivated by the considerations made in Section \ref{sect:prelim}, we look now for a function with positive constant Laplacian.  By elementary methods (using \eqref{eqn:LB-H}), one finds that
\begin{equation}
\label{eqn:A}
A(\h) =  2 \log{ \cosh{\left( \frac{\h}{2} \right)}}
\end{equation}
satisfies
\[
\Delta_H A(\h) = 1.
\]
Working in more generality, and using the hyperbolic distance between points, it also holds that:
\[
\Delta_H A(d(\bfx,\bfx')) = 1.
\]
Consequently, we choose the interaction potential as 
\begin{equation}
\label{eqn:K-hyper}
K(\bfx,\bfx') = G_H(\bfx;\bfx') + A(d(\bfx,\bfx')),
\end{equation}
which by the considerations above satisfies
\begin{equation}
\label{eqn:KLaplacian}
\Delta_H K(\bfx,\bfx') =  -\delta_{\bfx'} + 1.
\end{equation}

\begin{rmk}
\label{remark:gamma}
We note that one can multiply the function $A$ by any positive constant and reach a model with similar properties. Indeed, the interaction potential
\begin{equation*}
K(\bfx,\bfx') = G_H(\bfx;\bfx') + C \cdot A(d(\bfx,\bfx')),
\end{equation*}
satisfies
\begin{equation*}
\Delta_H K(\bfx,\bfx') = -\delta_{\bfx'} + C.
\end{equation*}
Consequently, any $C>0$ serves the purpose of the present study, which is to have a model that evolves into an equilibrium of constant density (the constant density in this case would be $Cm$, see \eqref{eqn:evol-charact-M}). 
\end{rmk}

Using local coordinates, \eqref{eqn:conv} becomes
\[
K * \rho (\h,\phi) = \int_{0}^{2\pi} \int_{0}^{\infty} K(\h, \phi, \h', \phi') \rho(\h',\phi') \sinh{\h'} d\h' d\phi',
\]
where we abused notation and wrote $K(\h, \phi, \h', \phi')$ for $K(\bfx,\bfx')$.

To determine the nature of the potential \eqref{eqn:K-hyper} we proceed as in the sphere case. With no loss of generality, we take $\bfx'$ to be the vertex of the hyperboloid (of local coordinates $(0,0)$) and $\bfx = (\h,\phi)$ a generic point on $H$, so that $d(\bfx,\bfx') = \h$. From \eqref{eqn:v}, \eqref{eqn:nabla-H} and \eqref{eqn:K-hyper} (see also \eqref{eqn:Green-hyper} and \eqref{eqn:A}), by interacting with the vertex, the point $\bfx$ moves in the direction 
\begin{equation}
\label{eqn:repel-hyper}
- \nabla_H K(\h,\phi,0,0) = \underbrace{\frac{1}{2 \pi} \frac{1}{\sinh \h}}_{>0 \text{ (repulsion)}} \bfx_\h \underbrace{ -\tanh \frac{\h}{2}}_{<0 \text{ (attraction)}}\bfx_\h.
\end{equation}
Specifically, the first term in the right-hand-side of \eqref{eqn:repel-sphere}, resulting from the Green's function component of $K$, is repulsive, while the second term, corresponding to term $A(\h)$ of $K$, is attractive -- see Figure \ref{fig:att-rep}(b) for an illustration. Repulsion and attraction balance each other at $\theta = \operatorname{cosh}^{-1}\left( 1+\frac{1}{2 \pi} \right)$, while the net effect of the two interactions is repulsive or attractive at distances $\theta$ smaller or larger than this value, respectively (short range repulsion and long range attraction).

\begin{rmk}
\label{remark:ar-similarity}
The attractive-repulsive character of the interactive potential \eqref{eqn:K-hyper} mimics very well that of the interaction potential \eqref{eqn:K-plane} from the Euclidean case. Namely, it contains a repulsive component due to the Green's function of the negative Laplacian (on $H$), and an attractive component that is smooth and unbounded (growing at infinity). Remarkably, as shown next, this similar structure results in similar long-time behaviours of solutions of the two models, that is, to approach asymptotically equilibria of constant densities supported on geodesic disks.
\end{rmk}


\subsection{Asymptotic convergence to equilibrium} 
\label{subsect:asympt-hyper}

Set the (conserved) total mass $m=1$. Similar to the Euclidean and the sphere cases, from \eqref{eqn:v}, \eqref{eqn:K-hyper}, \eqref{eqn:KLaplacian} and the conservation of mass, we have (see \eqref{eqn:div-v-M}):
\begin{equation}
\label{eqn:div-v-hyper}
\nabla_H \cdot \mathbf{v} = \rho - 1. 
\end{equation}

Hence, along the flow $\Phi_t(\bfX)$, the density $\rho(\Phi_t(\bfX),t)$ evolves according to (see \eqref{eqn:evol-charact-M}):
\begin{equation}
\label{eqn:rho-flow-hyper}
\frac{D}{Dt} \rho =-\rho \left(\rho - 1\right).
\end{equation}
We conclude from \eqref{eqn:rho-flow-hyper} that densities $\rho(\Phi_t(\bfX),t)$ approach the constant value $1$ along particle trajectories, and consequently, the equilibrium is constant ($=1$) on its support.

As for the sphere,  \eqref{eqn:rho-flow-hyper} can be solved exactly, for an initial density $\rho_0$:
\begin{equation}
\rho(\Phi_t(\bfX),t)= \frac{1}{ 1 +\left(  \frac{1}{\rho_{0}(\bfX)}- 1 \right) e^{-t}}.
\label{eqn:rho-h}
\end{equation}

Also, from \eqref{eqn:continuity-alt} and \eqref{eqn:rho-h}, we find the Jacobian of the flow map:
\begin{equation}
\label{eqn:J-h}
J(\bfX,t) = \rho_{0}(\bfX) + \left(  1 - \rho_{0}(\bfX) \right)  e^{-t},
\end{equation}
where we have used the notation \eqref{eqn:J-particle}. Note again that $J(\bfX,t) >0$ for all times, so the particle map is invertible for as long as it exists.

Regarding the equilibrium configuration, by conservation of mass, we can infer immediately that the surface area of the equilibrium density is $1$. However, \eqref{eqn:rho-flow-hyper} does not provide any information about the shape of the equilibrium's support. As opposed to the sphere case, where we showed the asymptotic convergence for arbitrary initial densities (Theorem \ref{thm:sphere}), for the hyperboloid we will only prove asymptotic convergence for {\em symmetric} initial data. The result is the following theorem.

\begin{thm} [Attractor for hyperbolic plane: symmetric initial density]
\label{thm:hyper} Consider model \eqref{eqn:model} on $H$,
with the interaction potential  \eqref{eqn:K-hyper}. Take a compactly supported initial density $\rho_0$ on $H$ that is symmetric\footnote{We call a density symmetric about a point if it is constant on geodesic circles centred at that point.} about the vertex of the hyperboloid, and assume that model \eqref{eqn:model} with this initial data has a global in time $C^1$ solution $\rho(\bfx,t)$, with a flow map $\Phi_t:H \to H$ that is $C^1$ and invertible for all $t>0$.  Then, $\rho(\bfx,t)$ approaches asymptotically, as $t\to \infty$, a constant equilibrium density supported over a geodesic disk centred at the vertex.
\end{thm}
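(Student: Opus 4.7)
My plan is to mirror the strategy used for the sphere in Section \ref{subsect:asympt-sphere}, exploiting the symmetry assumption to obtain an explicit flow map and then passing to the limit $t\to\infty$.

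\textbf{Step 1: preservation of symmetry.} First I would observe that, because $K(\bfx,\bfx')$ depends only on the geodesic distance and the vertex of the hyperboloid is a point of rotational symmetry (rotations about the $z$-axis are isometries of $H$), any density $\rho_0$ that depends only on $\h$ produces a velocity field $\mathbf{v}$ of the form $v(\h,t)\,\bfx_\h$. Hence particle trajectories preserve the $\phi$-coordinate, and the symmetry of $\rho(\cdot,t)$ is propagated in time. In local coordinates the flow map takes the form $(\h,\phi)\mapsto(\lambda(\h,t),\phi)$ for some function $\lambda$, with $\lambda(0,t)=0$ by symmetry.

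\textbf{Step 2: an explicit ODE for $\lambda$.} Using the volume form $\mu=\sinh\h\,d\h\wedge d\phi$ together with \eqref{eqn:J-flow}, the Jacobian identity in local coordinates becomes
\eq{\sinh(\lambda(\h,t))\,\lambda_\h(\h,t)=J(\bfX,t)\,\sinh\h.\nonumber}
Inserting the exact formula \eqref{eqn:J-h} for $J$ and integrating in $\h$ from $0$ (using $\lambda(0,t)=0$) yields the explicit flow
\eq{\label{eqn:plan-flow-hyper}\cosh(\lambda(\h,t))-1=\int_0^\h\bigl[\rho_0(\xi)+(1-\rho_0(\xi))e^{-t}\bigr]\sinh\xi\,d\xi.}

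\textbf{Step 3: asymptotic support.} Let $[0,\h_1]$ contain the support of $\rho_0$ (by symmetry and compact support). Sending $t\to\infty$ in \eqref{eqn:plan-flow-hyper} gives
\eq{\cosh(\Lambda_\h)-1=\int_0^\h\rho_0(\xi)\sinh\xi\,d\xi,\qquad\Lambda_\h:=\lim_{t\to\infty}\lambda(\h,t).\nonumber}
Mass conservation in spherical/hyperbolic coordinates reads $1=2\pi\int_0^\infty\rho_0(\xi)\sinh\xi\,d\xi$, so for every $\h\geq\h_1$ one has $\cosh(\Lambda_\h)-1=\frac{1}{2\pi}$, i.e.\ all such trajectories accumulate on the geodesic circle of radius $R:=\cosh^{-1}\!\left(1+\frac{1}{2\pi}\right)$. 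For $\h\in[0,\h_1]$, the right-hand side is continuous and nondecreasing in $\h$, so $\Lambda_\h$ sweeps continuously from $0$ to $R$, covering the geodesic disk $D_R$ centered at the vertex.

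\textbf{Step 4: identification of the limit density.} Combining Step 3 with the pointwise convergence $\rho(\Phi_t(\bfX),t)\to 1$ along every trajectory originating in $\operatorname{supp}\rho_0$ (from \eqref{eqn:rho-h}), I would conclude that the image measure $(\Phi_t)_\#(\rho_0\mu)$ converges weakly to the uniform measure of density $1$ on $D_R$. A sanity check on the mass, $\operatorname{Area}(D_R)=2\pi(\cosh R-1)=1$, confirms consistency with mass conservation, so the limiting equilibrium is
\eq{\bar\rho_H(\bfx)=\mathbf{1}_{D_R}(\bfx),\qquad R=\cosh^{-1}\!\left(1+\tfrac{1}{2\pi}\right).\nonumber}

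The main technical obstacle is the last identification step: passing from the pointwise Lagrangian limits (density along each trajectory and endpoint of each trajectory) to a genuine Eulerian convergence statement for $\rho(\cdot,t)$. The $C^1$/invertibility assumptions on $\Phi_t$ should make this routine via the change-of-variables formula \eqref{eqn:continuity-alt}, but one must verify that trajectories originating outside $\operatorname{supp}\rho_0$ (which carry zero mass but accumulate on $\partial D_R$) do not contribute spurious mass, and that the monotone boundary mapping in Step 3 gives a well-defined limit support without overlap.
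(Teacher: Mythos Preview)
Your proposal is correct and follows essentially the same route as the paper: preserve rotational symmetry so that the flow is $(\h,\phi)\mapsto(\lambda(\h,t),\phi)$, use the Jacobian identity $\sinh(\lambda)\lambda_\h=J\sinh\h$ together with the explicit expression \eqref{eqn:J-h} to obtain the closed formula \eqref{eqn:plan-flow-hyper}, and then pass $t\to\infty$ with mass conservation to identify the limiting support as the geodesic disk of radius $R=\cosh^{-1}(1+\tfrac{1}{2\pi})$. The paper's argument is in fact no more detailed than yours at the final identification step you flag as the main obstacle; it simply combines the Lagrangian limits with mass conservation to conclude, so your caution there is appropriate but not a gap relative to the original.
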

\begin{proof}
We follow very similar considerations as in Section \ref{subsect:asympt-sphere}. The flow map $\Phi_t:H \to H$ (assumed to be smooth and invertible) has a Jacobian $J(\Phi_t)$ that satisfies \eqref{eqn:J-flow} \cite[Proposition 7.1.10]{AMR1988}. Here, $\mu$ represents the canonical volume form on $H$ which in local coordinates \eqref{eqn:hyper-param} is given by:
\begin{equation}
\label{eqn:mu-hyper}
\mu = \sinh \h \, d\h \wedge d \phi.
\end{equation}

Using an argument similar to that for the sphere, one infers that an initial density that is symmetric about the vertex results in a symmetric solution for all times (as points move along meridians with velocities which only depend on their $\h$ coordinate). Hence,  the flow map in local coordinates takes a point $\bfX \in H$ of coordinates $(\h,\phi$) into $\Phi_t(\bfX) \in H$ of coordinates $(\lambda(\h,t),\phi)$, for some function $\lambda$. Following exactly the same steps as in Section \ref{subsect:asympt-sphere}, we find from \eqref{eqn:J-flow} and \eqref{eqn:mu-hyper}:
\begin{equation}
\label{eqn:J-lambda-hyper}
 \sinh{(\lambda(\h,t))} \lambda_\h(\h,t) = J(\bfX,t) \, {\sinh \h},
\end{equation}
and further, by using \eqref{eqn:J-h}, we get a differential equation for $\lambda$:
\begin{equation}
\label{eqn:J-lambda-mod}
\sinh{(\lambda(\h,t))}  \lambda_\h(\h,t) =  \left( \rho_{0}(\h) + \left(  1 - \rho_{0}(\h) \right)  e^{-t}\right)  {\sinh \h}.
\end{equation}
Note that in \eqref{eqn:J-lambda-mod} we abused notation and wrote (due to symmetry of initial data) $\rho_0(\h)$ for $\rho_0(\bfX)$. The left-hand-side in \eqref{eqn:J-lambda-mod} is a total derivative. Upon integration, we find:
\begin{equation}
\label{eqn:flow-exact-hyper}
\cosh (\lambda(\h,t)) = 1 + \int_0^\h \rho_{0}(\xi) \sinh \xi \, d\xi + e^{-t} \int_0^\h \left(  1 - \rho_{0}(\xi) \right) \sinh \xi \, d\xi,
\end{equation}
where we used the symmetry of the flow, by which the vertex of the hyperboloid ($\h=0$) remains fixed (i.e., $\lambda(0,t)=0$ for all $t$). Equation \eqref{eqn:flow-exact-hyper} provides an exact explicit expression for the flow map $(\h,\phi) \to (\lambda(\h,t),\phi)$.

For the asymptotic behaviour we pass $t \to \infty$ in \eqref{eqn:flow-exact-hyper}. We find:
\begin{equation}
\label{eqn:coshR}
\cosh {R_\h} = 1+ \int_0^\h \rho_{0}(\xi) \sinh \xi \, d\xi.
\end{equation}
where 
\[
R_\h = \lim_{t \to \infty} \lambda(\h,t).
\]
Recall that we have set the total mass to be $1$, which in local coordinates yields:
\begin{equation}
\label{eqn:m1hyperboloid}
1 = \int_0^{2 \pi} \int_0^\pi \rho_{0}(\h) \sinh \h \, d\h d \phi.
\end{equation}

The initial density was assumed of compact support, so consider $\theta_0>0$ such that the support of $\rho_0$ is contained in $0\leq \theta \leq \theta_0$. Then, for any particle trajectory that originates from $\theta>\theta_0$, by \eqref{eqn:coshR} and \eqref{eqn:m1hyperboloid}, we infer that ${R_\h} = R$, where 
\begin{equation}
\label{eqn:R-hyper}
R = \cosh^{-1} \Bigl(1 + \frac{1}{2\pi} \Bigr).
\end{equation}
By elementary geometric considerations \cite{CaFlKePa1997}, $R$ represents the radius of the geodesic disk of unit area centred at the vertex of the hyperboloid. Hence, all trajectories starting from $\theta>\theta_0$ approach the geodesic circle of radius $R$ as $t \to \infty$. 

Finally, by \eqref{eqn:coshR}, $R_\theta$ is monotonic and continuous in $\theta$, so trajectories starting from inside $\h < \h_0$ end up inside the geodesic circle of radius $R$. Combining these facts with the conservation of mass and the fact that $\rho \to 1$ along {\em all} particle paths as $t\to \infty$, we infer that symmetric solutions of model \eqref{eqn:model} approach asymptotically an equilibrium of constant density ($=1$) supported on the geodesic disk of radius $R$ centred at the vertex. That is, initial densities that are symmetric about the vertex, are globally attracted to:
\begin{equation}
\bar{\rho}_H(\bfx)=%
\begin{cases}
1 & \quad\text{ if }\;\h <R \\
0 & \quad\text{ otherwise},
\end{cases}
\label{eqn:sstate-hyper}
\end{equation}
where $\bfx=(\h,\phi)$ and $R$ is given by \eqref{eqn:R-hyper} -- see Figure \ref{fig:hyper-sym} for a numerical illustration using a particle method.
\end{proof}

\begin{rmk}
\label{rmk:symmetry-hyper} Theorem \ref{thm:hyper} holds more generally, for initial densities that are symmetric with respect to an arbitrary point on $H$. Since the interaction potential (and consequently, the interaction forces) only depend on the geodesic distance between points, by symmetry, particles flow along geodesic rays through the symmetry point. The exact expressions for the particle paths, as well as the considerations on the asymptotic behaviour could be then adapted to this more general context.
\end{rmk}


\subsection{Numerical results}
\label{subsubsect:numerics-hyper}

We use the particle method detailed in Section \ref{subsubsect:numerics-sphere}. Specifically, we consider $N$ particles $\bfx_i$ on $H$ and evolve them according to the discrete model
\begin{equation}
\label{eqn:discrete-H}
\frac{d \bfx_i}{dt} = -\frac{1}{N} \sum_{\substack{j = 1 \\ j \ne i}}^N \nabla_{H,i} K(\bfx_i,\bfx_j), \qquad i =1,\dots,N.
\end{equation}
In local coordinates (see the derivation of system \eqref{eqn:ode-sphere} for the sphere), this amounts to solving:
\begin{equation}
\label{eqn:ode-hyper}
\begin{aligned}
\frac{d \hi}{dt} &= -\frac{1}{N}\sum_{\substack{j=1 \\ j \neq i}}^N  \frac{\p K}{\p \hi} (\h_i,\phi_i,\h_j,\phi_j), \qquad i=1,\dots,N\\
\frac{d \phi_i}{dt} &=-\frac{1}{N}\sum_{\substack{j=1 \\ j \neq i}}^N { \frac{1}{\sinh^2{\hi}} \frac{\p K}{\p \phi_i}} (\h_i,\phi_i,\h_j,\phi_j), \qquad i = 1,\dots,N.
\end{aligned}
\end{equation}

Again, the particle method is very appropriate for numerical simulations, given the method of characteristics used in the theoretical results (Section \ref{subsect:asympt-hyper}). The results presented below correspond to numerical solutions of \eqref{eqn:ode-hyper} using the classical Runge-Kutta method. 


\paragraph{Symmetric initial data.} We first validate numerically Theorem \ref{thm:hyper} and consider an initial particle configuration that is symmetric about the vertex (Figure \ref{fig:hyper-sym}(a)). This initial configuration was generated by first placing particles randomly along a given meridian ($\phi=\text{const.}$, $0.2 < \theta < 1.25$), and then rotating the configuration about the $z$-axis. We find that particles evolve along meridians (see the trajectories of the particles indicated by stars) into 
the symmetric configuration shown in Figure \ref{fig:hyper-sym}(b); see also the zoom-in plot in Figure \ref{fig:hyper-sym}(c). The distance from the vertex of the hyperboloid to the particles on the equilibrium's boundary is $0.5742$ which is within $3 \%$ of the geodesic radius computed in \eqref{eqn:R-hyper} ($R \approx 0.5570$). 

We have also performed numerical experiments with larger numbers of particles, to confirm that the particle method gives better numerical approximations of the continuum equilibrium as the number of particles increases. Indeed, with $N=400$ and $N=900$ particles we found symmetric equilibria supported in geodesic disks of radii $0.5692$ and $0.5661$, respectively, which are within $2.1 \%$ and $1.6 \%$  of the continuum geodesic radius in \eqref{eqn:R-hyper}.


\begin{figure}[!htbp]
 \begin{center}
 \begin{tabular}{cc}
 \includegraphics[width=0.5\textwidth]{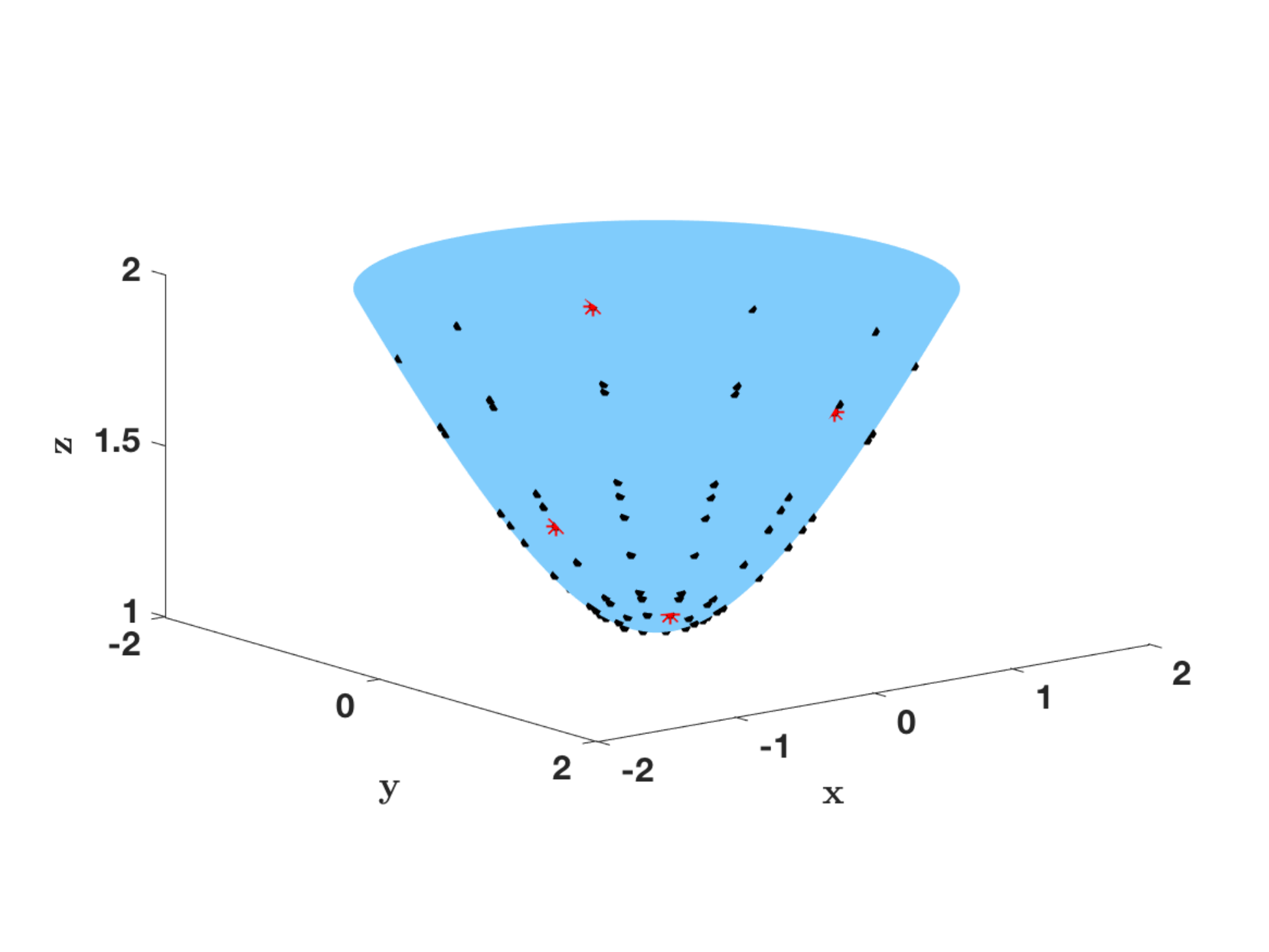}  &
 \includegraphics[width=0.5\textwidth]{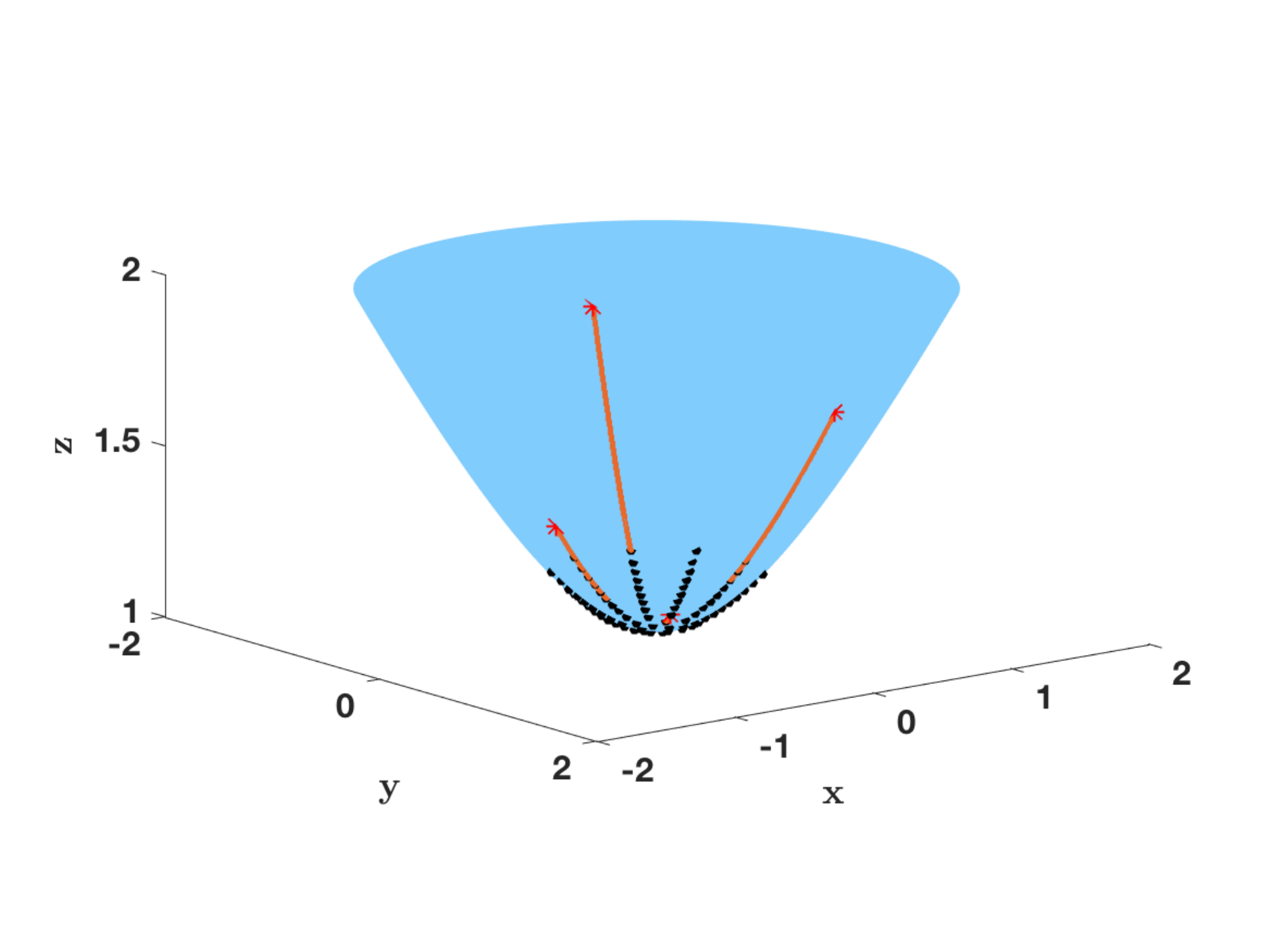} \\
 (a) & (b)
 \end{tabular}
 \begin{center}
 \begin{tabular}{c}
 \includegraphics[width=0.6\textwidth]{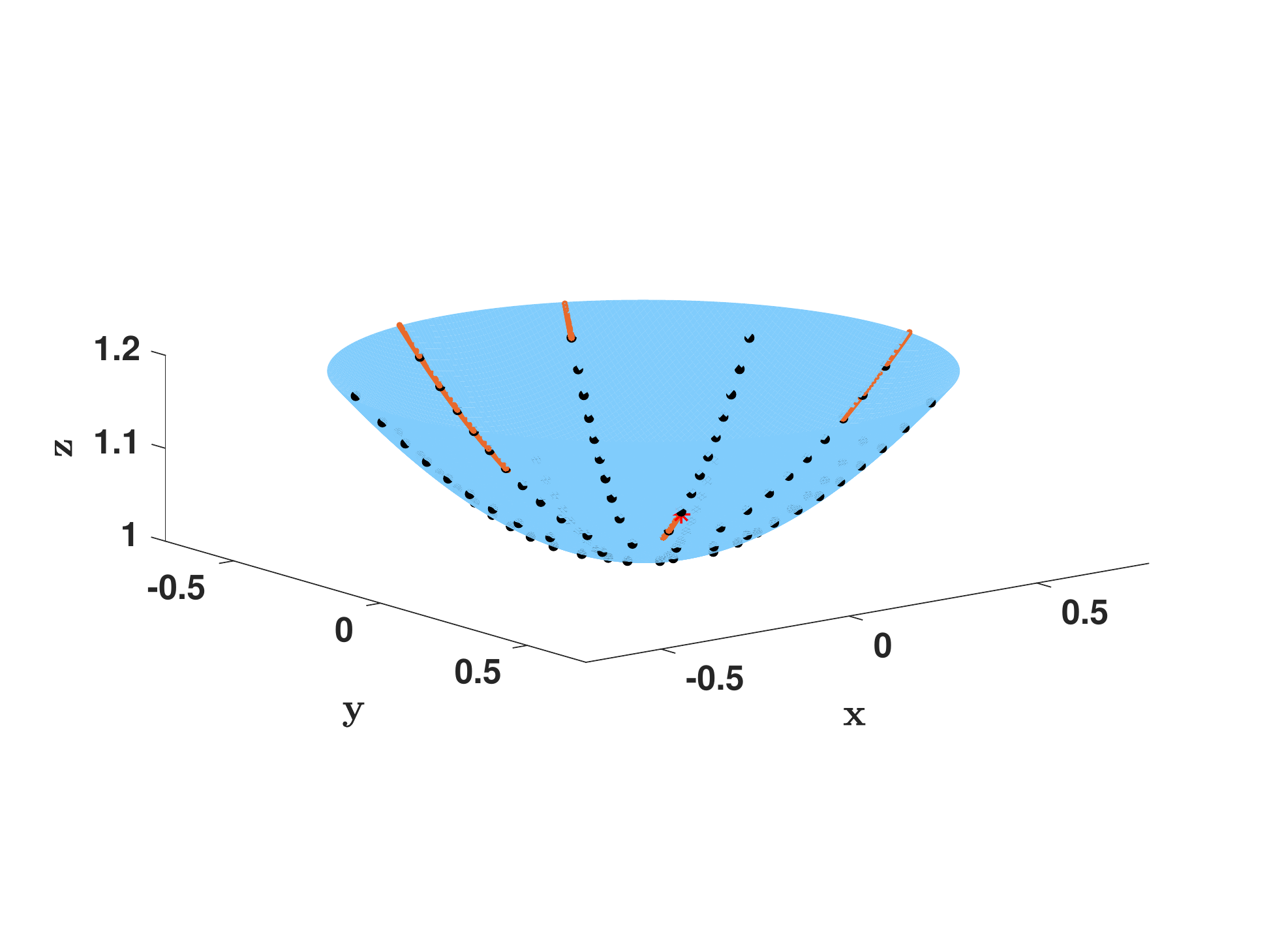}  \\
(c)
 \end{tabular}
 \end{center}
\caption{ Numerical simulation with $N=100$ particles for model \eqref{eqn:model} on $H$ with the interaction potential \eqref{eqn:K-hyper} (see also \eqref{eqn:Green-hyper} and \eqref{eqn:A}). (a) Symmetric (about the vertex) initial configuration on $H$, with $\theta$ coordinates generated randomly in the interval $(0.2,1.25)$.  (b) The configuration remains symmetric for all times and evolves into a uniform (with respect to the metric on $H$) particle distribution supported over a geodesic disk centred at the vertex -- see Theorem \ref{thm:hyper}. The solid lines represent the trajectories of the particles indicated by stars in figure (a). (c) Zoom-in of figure (b) on the equilibrium configuration.}
\label{fig:hyper-sym}
\end{center}
\end{figure}


\paragraph{Arbitrary initial data.} Our numerical simulations have indicated that all (not just the symmetric) solutions to model \eqref{eqn:model} on $H$, with interaction potential given by \eqref{eqn:K-hyper}, approach asymptotically constant equilibrium densities supported on geodesic disks of radius $R$ given by \eqref{eqn:R-hyper}. For the model in Euclidean space the analogous result was illustrated and proved in \cite{FeHuKo11, BertozziLaurentLeger}. To have such a result hold for the model on the hyperbolic plane suggests that the construction 
of the interaction potential in \eqref{eqn:K-hyper} can be potentially extended, with similar outcomes, to other non-compact manifolds.

Figure \ref{fig:hyper} corresponds to a numerical simulation with $N=100$ particles initiated at a randomly-generated configuration on $H$, with the $\theta$ and $\phi$ coordinates of the particles drawn randomly in the intervals $(0.3,2.3)$ and $(0,\pi/2)$, respectively. Figure \ref{fig:hyper}(a) shows the initial configuration and Figure \ref{fig:hyper}(b) shows the equilibrium state, along with some particle trajectories (see also the zoom-in plot in Figure \ref{fig:hyper}(c)). The equilibrium density is constant within its support (see \eqref{eqn:rho-flow-hyper} and \eqref{eqn:rho-h}), which for a particle simulation translates into a uniform distribution (with respect to the metric on $H$). We found very similar results with larger number of particles and different initialization procedures. Equilibria obtained from simulations with $N=400$ and $N=900$ will be discussed below.


Next we provide evidence that the support of the equilibrium is a geodesic disk. We illustrate the procedure for the equilibrium in Figure \ref{fig:hyper}(c). To check for the equilibrium's shape, we calculate numerically the Riemannian centre of mass of the equilibrium configuration. The Riemannian centre of mass is guaranteed to exist in such case, given that $H$ has negative curvature everywhere \cite{Afsari2011}. To locate the centre of mass we use the intrinsic gradient descent algorithm investigated in \cite{AfsariTronVidal2013}; recall that the Riemannian centre of mass of a set of points on a manifold minimizes the sum of squares of the geodesic distances to the data points, so using a gradient decent method is very natural here. Figure \ref{fig:Ri}(a) shows the centre of mass $C$ (red diamond) of the equilibrium configuration in Figure \ref{fig:hyper}(c) computed with this method.

After we locate the particles on the equilibrium's boundary (see filled blue circles in Figure \ref{fig:Ri}(a)) we compute the geodesic distances between the Riemannian centre of mass and the boundary points. We denote these distances by $R_i$ (note that the particles on the boundary have been relabelled so that they have consecutive indices starting from $1$) -- see Figure \ref{fig:Ri}(a) for an illustration. Figure \ref{fig:Ri}(b) shows these distances for three simulations.  For the simulation with $N=100$ discussed above, there are $31$ particles on the boundary.  Their distances to the Riemannian centre of mass of the equilibrium are shown in the figure as magenta circles connected by dotted lines, where the thick dotted line represents their mean value. The mean value is $0.5109$ with a relative standard deviation of $0.55\%$. The blue circles connected by dash-dotted lines correspond to a simulation with $N=400$ particles ($67$ particles on the equilibrium's boundary). The mean value (thick dash-dotted line) is $0.5330$ with a relative standard deviation of $0.47\%$. Finally, the red circles connected by dashed lines represent the distances from the centre to the boundary points for a simulation with $N=900$ particles ($98$ of which on the boundary). Their mean value (thick dashed line) is $0.5415$ with a relative standard deviation of $0.20\%$.

The results presented in Figure \ref{fig:Ri} strongly suggest that the continuum equilibrium is supported on a geodesic disk of radius $R$  given by \eqref{eqn:R-hyper} (this value has been indicated as a thick black solid line in the figure; recall $R \approx 0.5570$). Indeed, as the number of particles increases, not only that the mean value of the distances $R_i$ approach $R$, but also their relative standard deviation decreases. In other words, the larger the number of particles, the closer the boundary of the particle equilibrium is to a geodesic circle of radius $R$. Similar results were obtained with a variety of initial configurations and different number of particles. 

We conclude this section by posing the following conjecture: Geodesic disks of constant density are global attractors for model \eqref{eqn:model} on $H$ with the interaction potential \eqref{eqn:K-hyper}. Recall that the analogous statement holds for the model in Euclidean space with potential \eqref{eqn:K-plane} \cite{FeHuKo11, BertozziLaurentLeger}. One major obstacle for proving such a global convergence result is that the aggregation model on manifolds (and in particular on $H$) does not necessarily conserve the Riemannian centre of mass (see Section \ref{sect:prelim}). In Euclidean spaces the centre of mass is conserved and the general proof in \cite{BertozziLaurentLeger} relies fundamentally on this fact. Consequently, while the centre of the attracting geodesic disk is a priori known in Euclidean spaces (it is the centre of mass of the initial density), it is not known for the aggregation model on $H$. For this reason the proof for global attractors in the plane does not immediately extend to $H$.

\begin{figure}[!htbp]
 \begin{center}
 \begin{tabular}{cc}
 \includegraphics[width=0.5\textwidth]{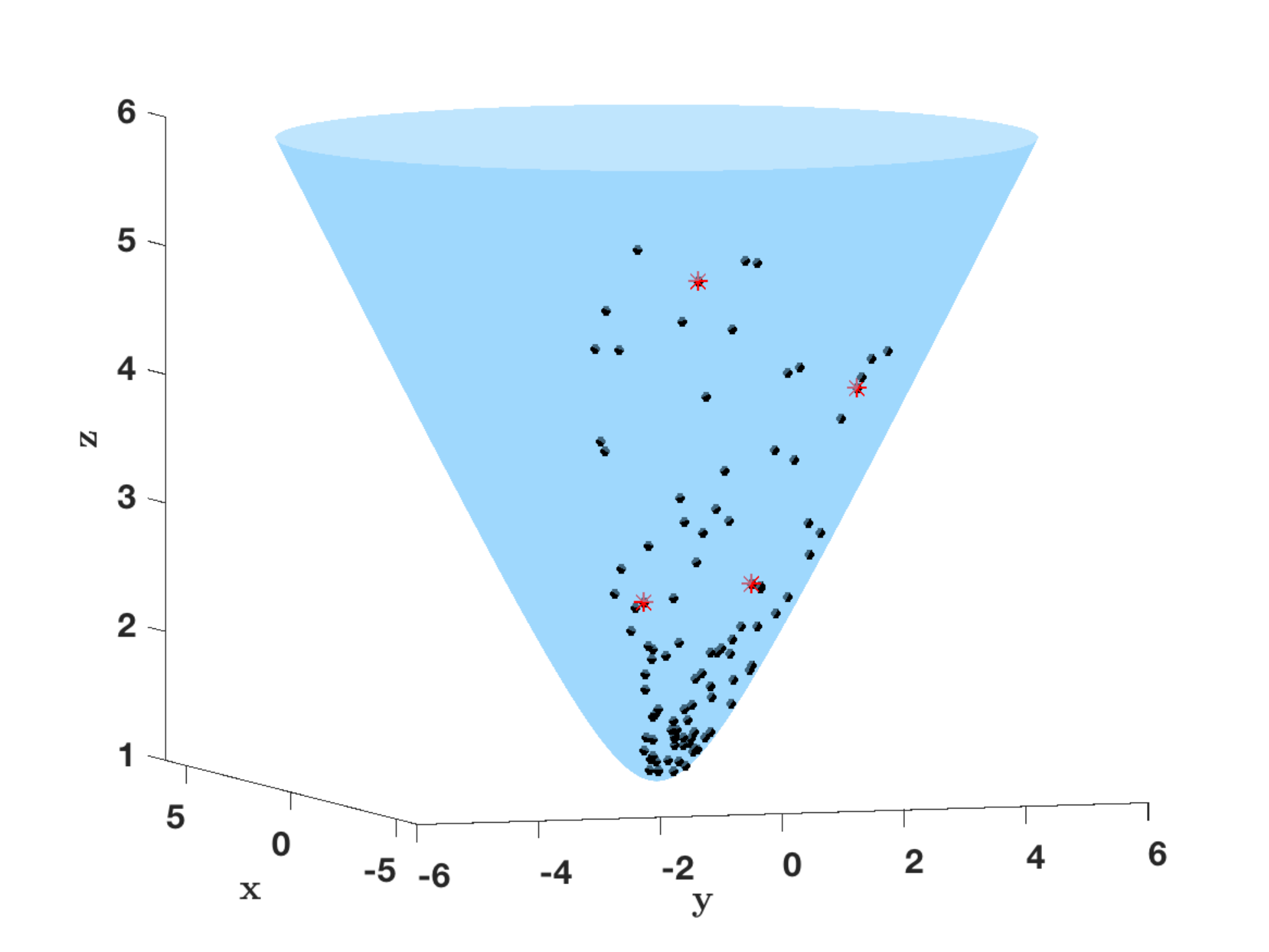} &  \includegraphics[width=0.5\textwidth]{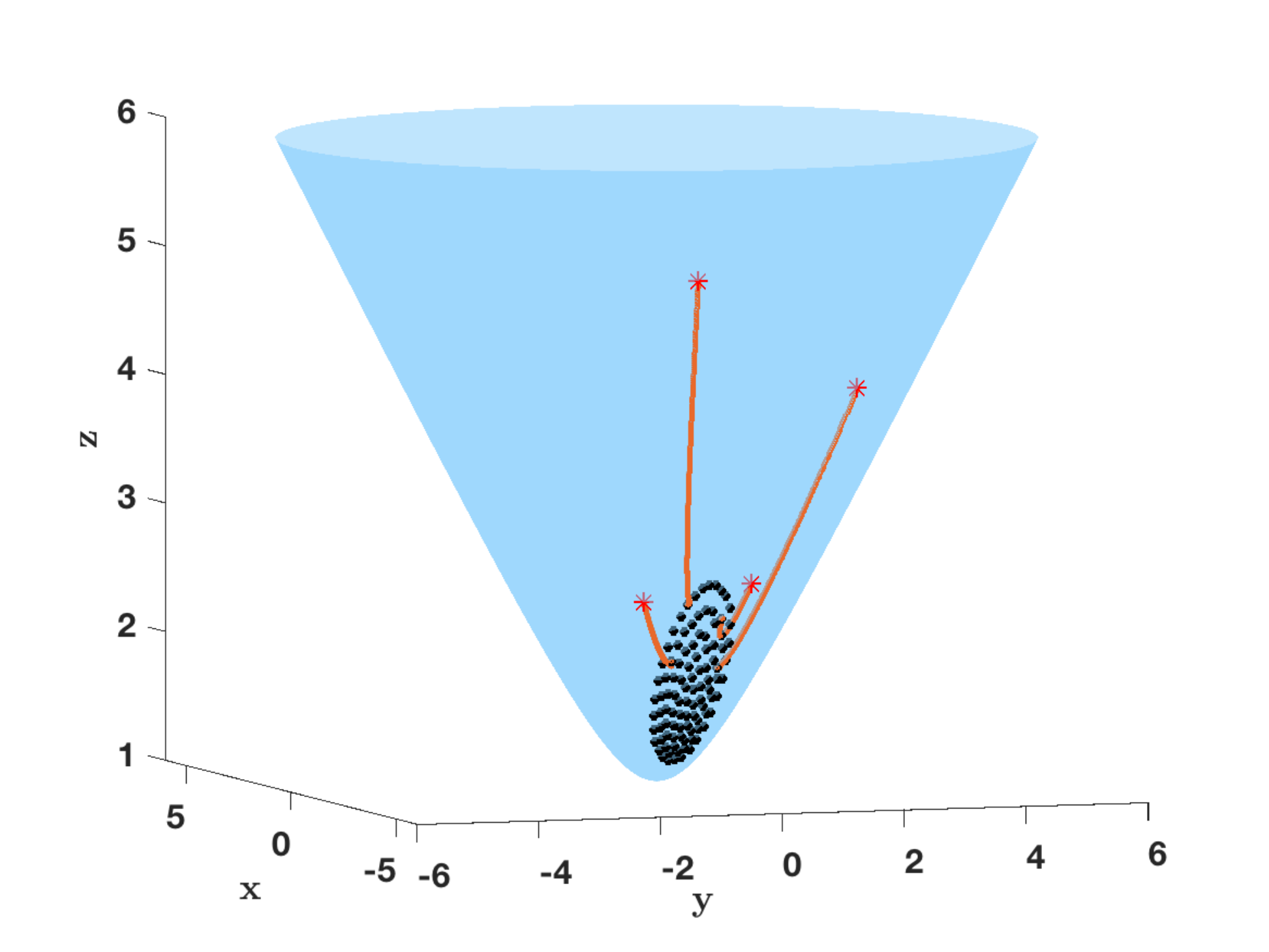} \\
 (a) & (b)
 \end{tabular}
\begin{center}
\begin{tabular}{c}
\includegraphics[width=0.5\textwidth]{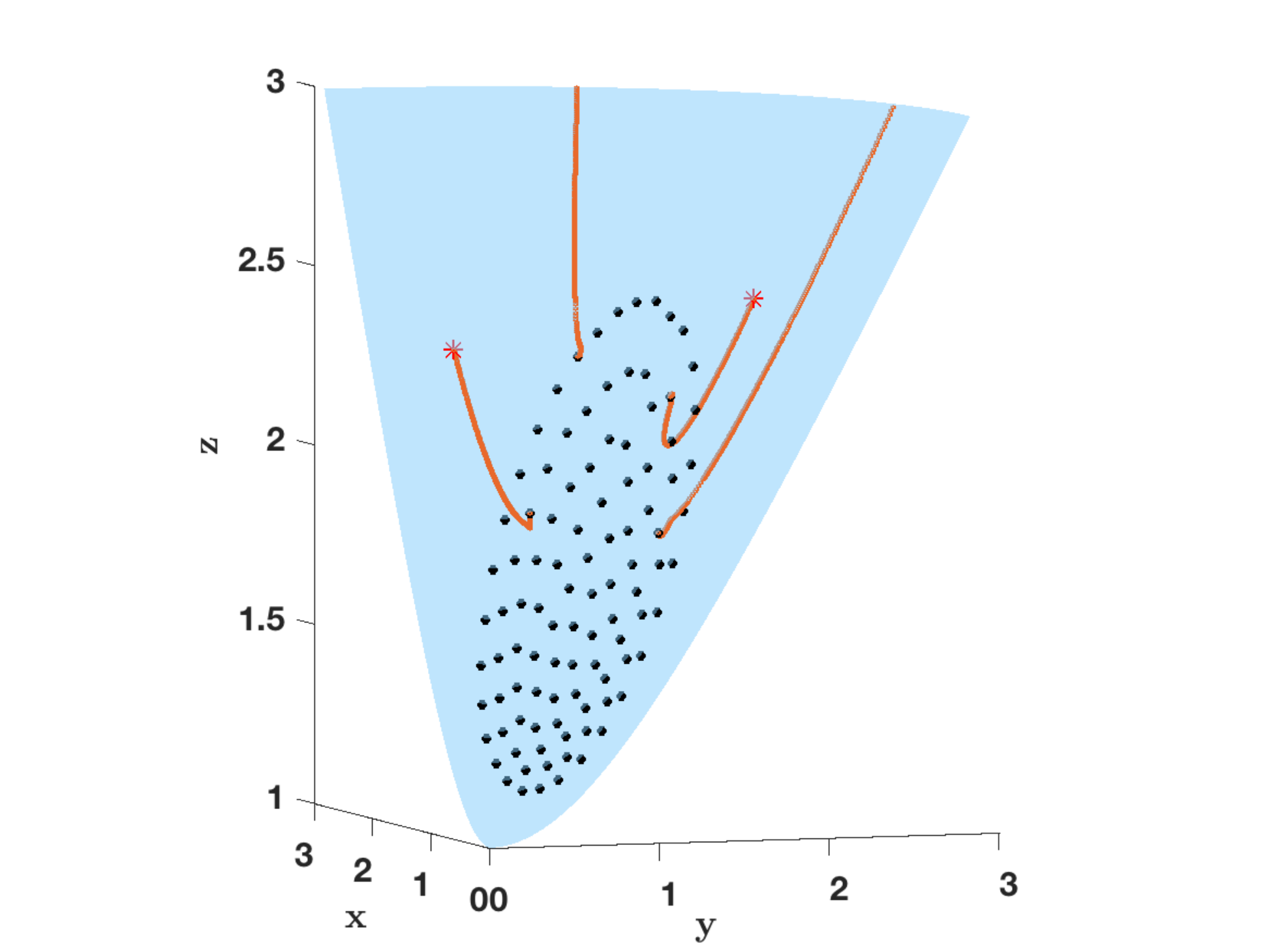}  \\
(c) 
\end{tabular}
\end{center}
\caption{Numerical simulation with $N=100$ particles for model \eqref{eqn:model} on $H$ with the interaction potential \eqref{eqn:K-hyper} (see also \eqref{eqn:Green-hyper} and \eqref{eqn:A}). (a) Random initial configuration on $H$, with coordinates $\theta$ and $\phi$ drawn randomly from intervals $(0.3,2.3)$ and $(0,\pi/2)$, respectively.  (b) Equilibrium state corresponding to the initial configuration in (a). The solid lines represent the trajectories of the particles indicated by stars. (c) Zoom-in of the equilibrium state in figure (b). Numerical investigations (see Figure \ref{fig:Ri}) suggest that the equilibrium configuration consists of a uniform (with respect to the metric on $H$) particle distribution supported over a geodesic disk of radius $R$ (see \eqref{eqn:R-hyper}).}
\label{fig:hyper}
\end{center}
\end{figure}



\begin{figure}[!htbp]
\begin{center}
\begin{tabular}{cc}
 \includegraphics[width=0.5\textwidth]{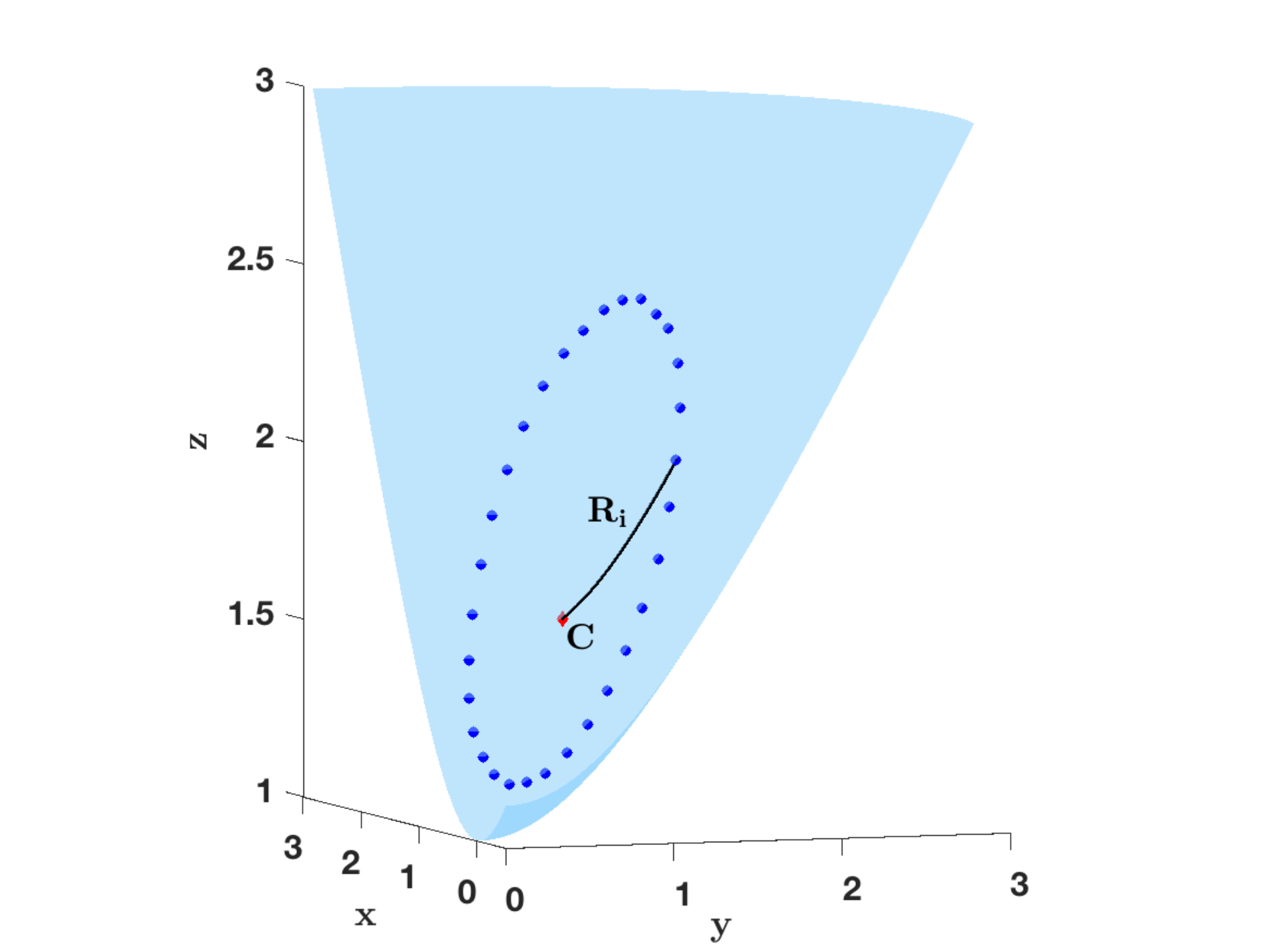} & \includegraphics[width=0.5\textwidth]{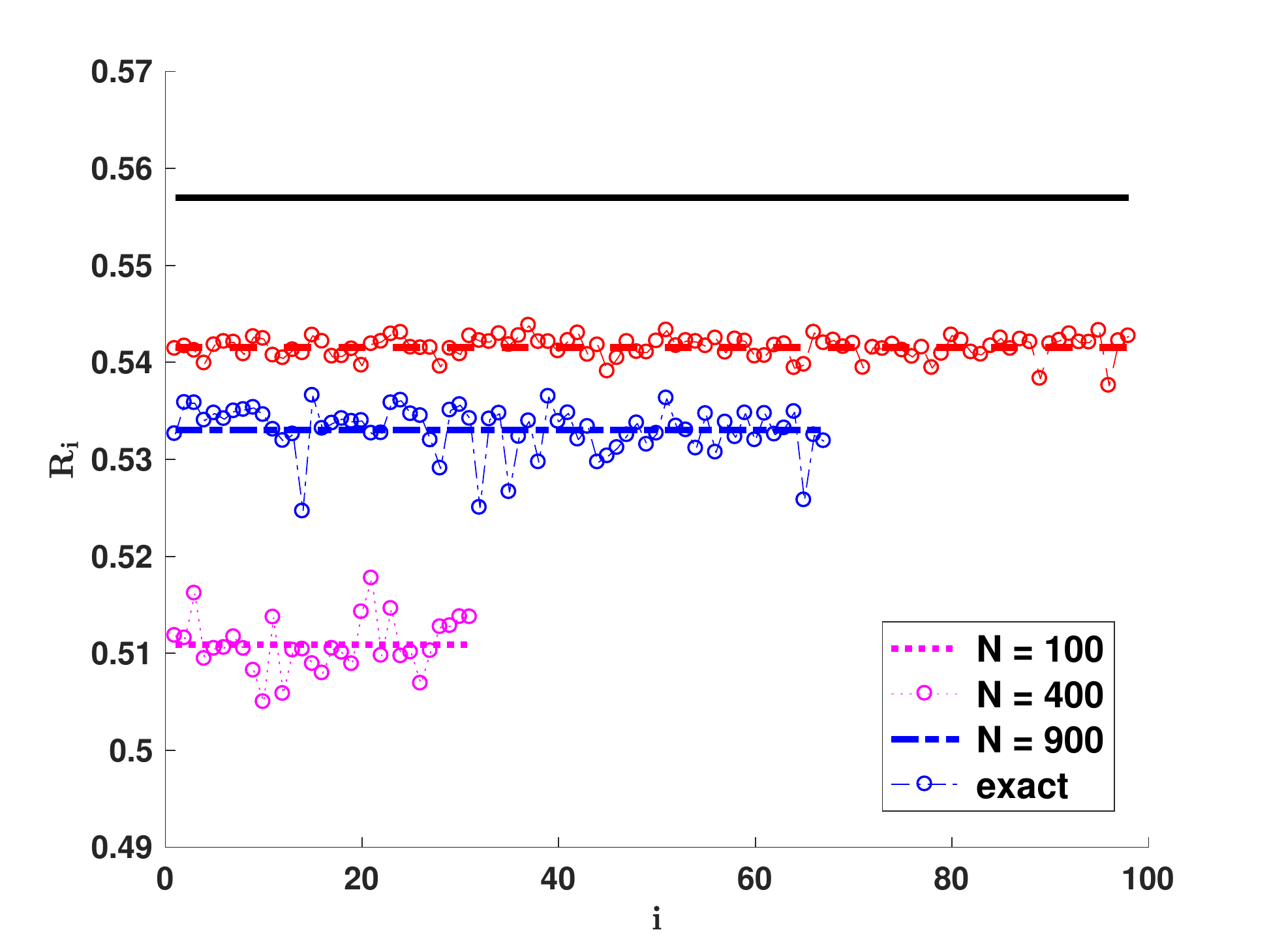} \\
 (a) & (b) 
 \end{tabular}
\caption{Numerical investigation of the equilibrium configurations on $H$. (a) Boundary points (filled blue circles) along with the Riemannian centre of mass (red diamond) of the equilibrium state in Figure \ref{fig:hyper}(c). (b) Distances from the Riemannian centre of mass to the particles on the equilibrium's support for 3 simulations: $N=100$, $400$ and $900$. Particles on the boundary have been relabelled to have consecutive indices starting from $1$. The distances are shown in circles (connected by dotted, dash-dotted and dashed lines, respectively) and the corresponding thick lines represent their mean values. There are $31$, $67$ and $98$ boundary points for the 3 simulations, with mean distances to centre of $0.5109$, $0.5330$ and $0.5415$, and relative standard deviations of $0.55\%$, $0.47\%$ and $0.20\%$, respectively. The results strongly suggest that the continuum equilibrium is supported on a geodesic disk of radius $R$  given by \eqref{eqn:R-hyper} (this value has been indicated as a thick solid line; $R \approx 0.5570$). }
\label{fig:Ri}
\end{center}
\end{figure}


\section{Other potentials and closing remarks}
\label{sect:future}

Potentials in power-law form have been frequently considered in the literature on the aggregation model \eqref{eqn:model}  in Euclidean spaces \cite{BaCaLaRa2013, FeHu13, FeHuKo11, KoSuUmBe2011, Brecht_etal2011}.  In our context, by using the geodesic distance between points, a power-law potential reads:
\begin{equation}
\label{eqn:K-pl}
K(\bfx,\bfx') = -\frac{1}{p}d(\bfx,\bfx')^p + \frac{1}{q}d(\bfx,\bfx')^q,
\end{equation}
where the exponents $p$ and $q$ (with $p<q$) correspond to repulsive and attractive interactions, respectively.

As shown in various works \cite{FeHu13, KoSuUmBe2011, Brecht_etal2011}, the delicate balance between attraction and repulsion often leads to complex equilibrium configurations, supported on sets of various dimensions. The existence and characterization of equilibria as minimizers of the interaction energy has been an active research topic lately (existence was investigated
in \cite{ChFeTo2015, SiSlTo2015, CaCaPa2015}, the dimensionality of local minimizers was studied in \cite{Balague_etalARMA}). Below, we consider the aggregation model \eqref{eqn:model} on the hyperbolic plane $H$ and demonstrate numerically that a variety of equilibria can also be obtained with a power-law potential on such manifold.

Figure \ref{fig:hyper-pl}(a)-(c) shows three equilibria obtained numerically for model \eqref{eqn:model} on $H$ with interaction potential \eqref{eqn:K-pl}. The equilibria are of very different type and dimension: (a) an equilibrium supported on an annular region, (b) concentration on a geodesic circle (ring), and (c) delta accumulation on three points. We note that similar equilibria have been found for the aggregation model in Euclidean plane with power-law interaction potentials \cite{FeHu13, KoSuUmBe2011, ChFeTo2015, Balague_etalARMA}. We also point out that to establish whether certain boundaries/shapes consist of geodesic circles, we followed a very similar procedure to that discussed in the previous section (see Figure \ref{fig:Ri}). Namely, we located the Riemannian centre of mass of the equilibrium, along with the particles on the boundaries, and computed the mean distance (radius) from the centre of mass to these particles. In all cases, including the simulations presented below, the radii of these various circles had a relative standard deviation within 1\% (in most cases much smaller in fact).


Another class of interaction potentials which has been widely used in the literature on model \eqref{eqn:model} in $\R^n$ consists of (generalized) Morse-type potentials  \cite{CaHuMa2014}: 
\begin{equation}
\label{eqn:K-gM}
K(\bfx,\bfx')  = V(d(\bfx,\bfx')) - CV (d(\bfx,\bfx')/l),
\end{equation}
where
\begin{equation}
\label{eqn:Morse}
V(r) = -e^{-\frac{r^s}{s}}, \qquad \text{ with } s > 0.
\end{equation}
Here, $C$ and $l$ are positive constants, which control the relative size and range of the repulsive interactions. In one dimension with $s=1$, $V(r)$ is a multiple of the Green's function of the differential operator $\p^2_r - \operatorname{Id}$. This property enables explicit calculations of the equilibrium solutions by converting integral equations for equilibria into differential equations  \cite{BeTo2011}. Potentials of form \eqref{eqn:K-gM}-\eqref{eqn:Morse} have been also used in other models for swarming and 
flocking \cite{Chuang_etal}.

The aggregation model \eqref{eqn:model} in the Euclidean plane with Morse-type interaction potentials exhibits a wide variety of possible equilibria \cite{CaHuMa2014}. In Figure \ref{fig:hyper-pl}(d)-(f) we showcase some equilibria obtained numerically for the model in the hyperbolic plane with potential \eqref{eqn:K-gM}-\eqref{eqn:Morse}. We picked values of the parameters $C$, $l$ and $s$ that have been used for the Euclidean model \cite{CaHuMa2014}. In each plot we observe mixed dimensionality of the equilibrium's support: (d) geodesic circle (ring) with a delta accumulation at the centre, (e) concentration on a ring with a continuous density supported on a concentric geodesic disk, (f) concentration on a ring with a continuous density inside. 

\begin{figure}[!htbp]
 \begin{center}
 \begin{tabular}{ccc}
 \includegraphics[width=0.33\textwidth]{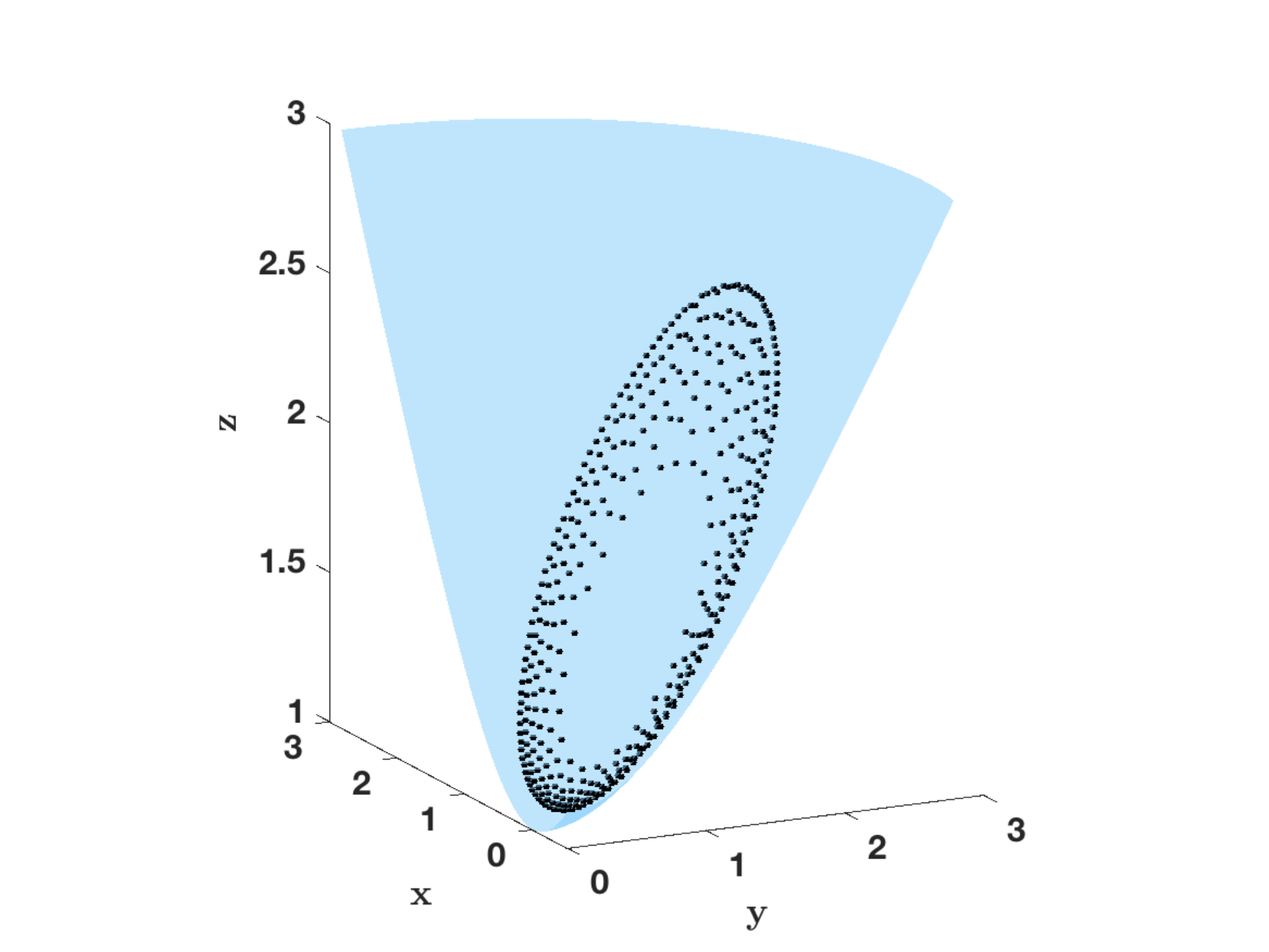} & 
 \includegraphics[width=0.33\textwidth]{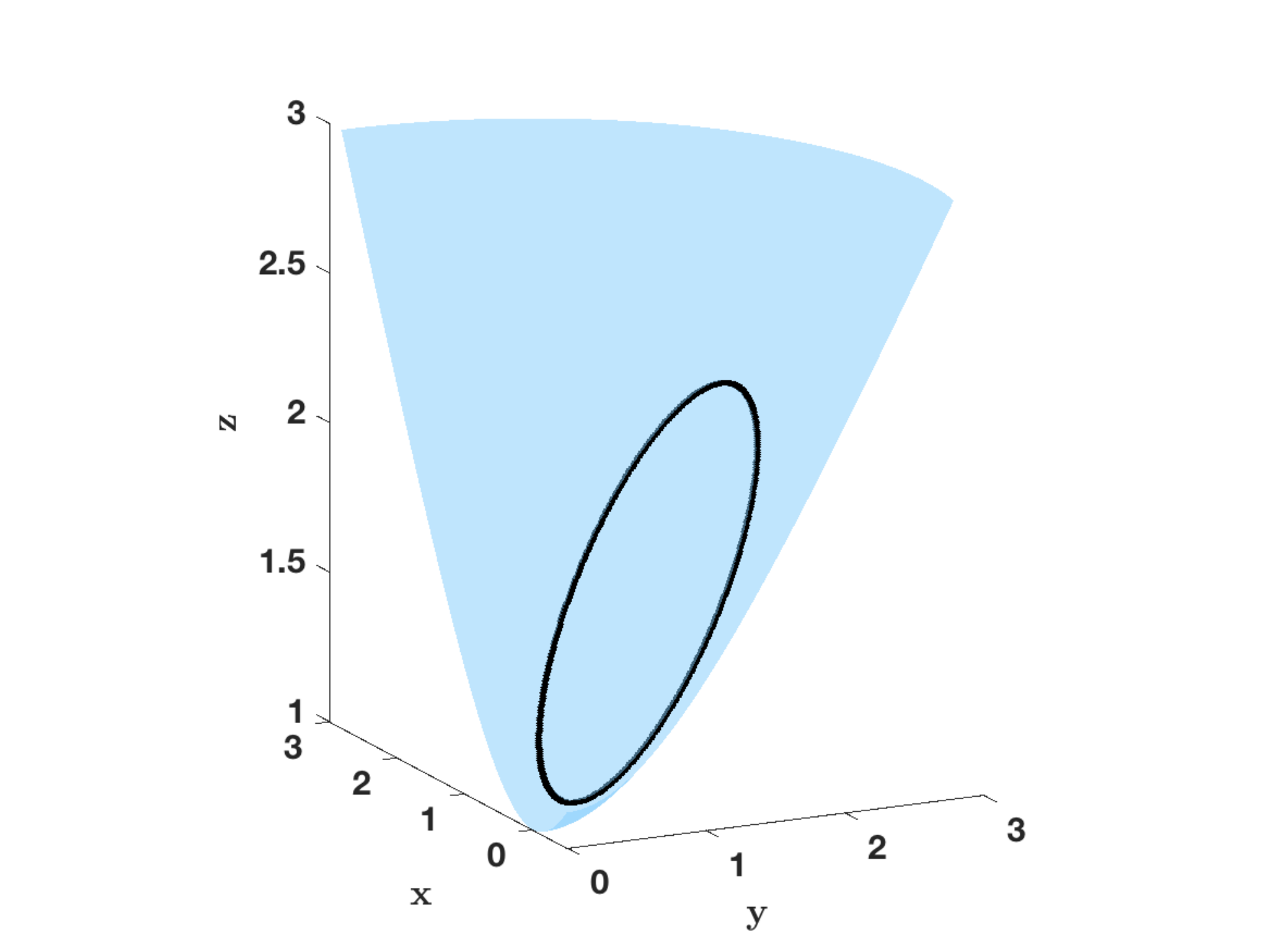}  &
 \includegraphics[width=0.33\textwidth]{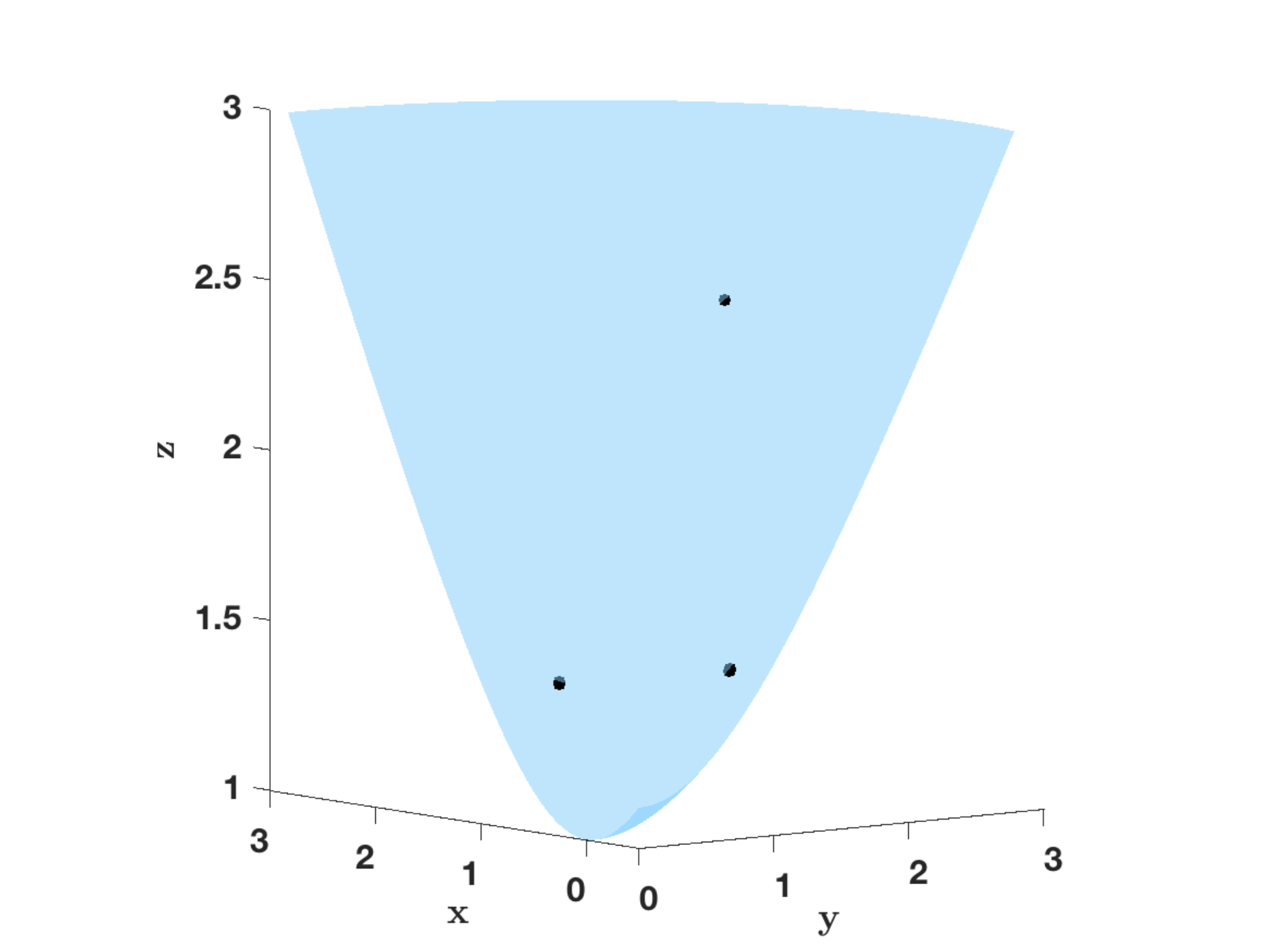} \\
 (a) & (b) & (c) \\
 \includegraphics[width=0.33\textwidth]{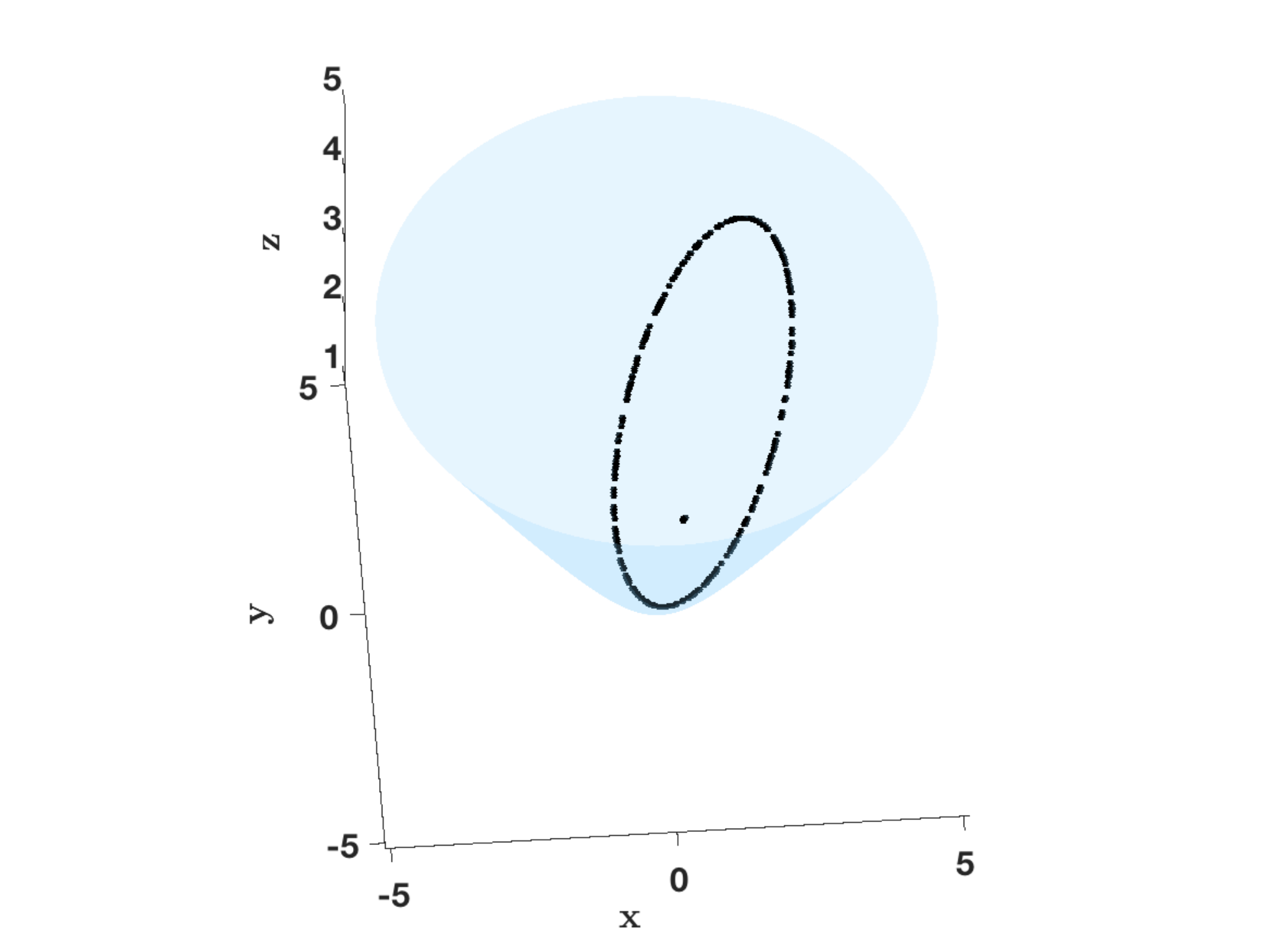} &
 \includegraphics[width=0.33\textwidth]{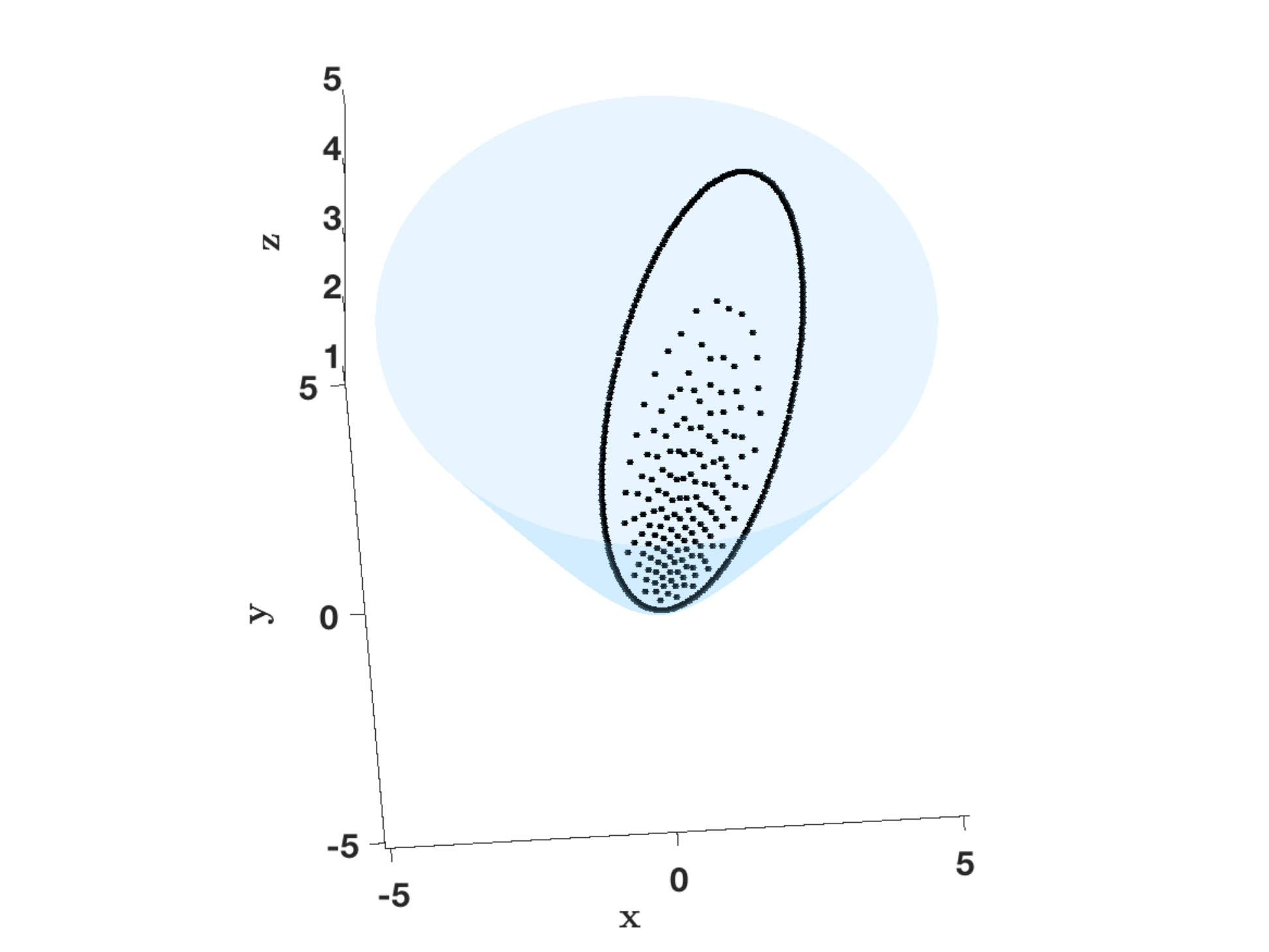} &
 \includegraphics[width=0.33\textwidth]{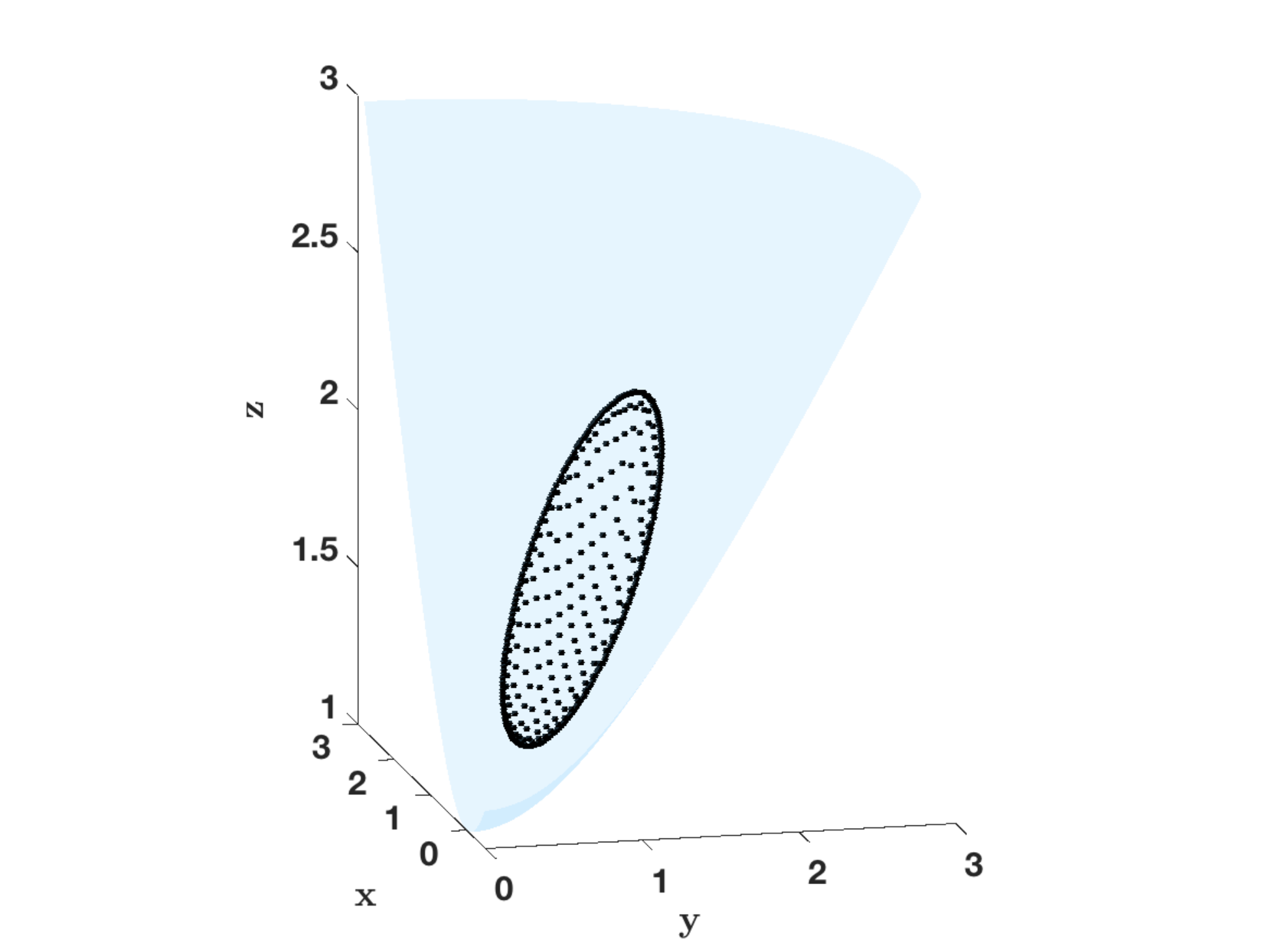} \\
  (d) & (e) & (f) 
 \end{tabular}
 \begin{center}
 \end{center}
\caption{Numerical explorations of equilibria of model \eqref{eqn:model} on $H$ with the power-law interaction potential \eqref{eqn:K-pl} (plots (a)-(c))) and the Morse-type potential \eqref{eqn:K-gM}-\eqref{eqn:Morse} (plots (d)-(f)). (a) $p=0.5$, $q=5$. Equilibrium density supported on an annular region. (b) $p=1$, $q=8$. Concentration on a geodesic circle (ring). (c) $p=6$, $q=7.5$. Equilibrium consists of a delta accumulation on three points. (d) $C = 1.2$, $l = 0.75$, $s = 2$. Equilibrium supported on a geodesic circle with a delta accumulation at the centre.  (e) $C = 1.2$, $l = 0.75$, $s = 1.8$. Concentration on a ring with a continuous density supported on a concentric interior disk. (f) $C = 0.6$, $l = 0.5$, $s = 1.5$. Concentration on a ring with a continuous density inside.}
\label{fig:hyper-pl}
\end{center}
\end{figure}

A rigorous investigation of equilibria shown in Figure \ref{fig:hyper-pl} is challenging. One possible direction could be to study the stability of some of these equilibria (e.g., the ring, the annular region), extending similar analyses in Euclidean spaces \cite{KoSuUmBe2011, Brecht_etal2011, BaCaLaRa2013}. Also, the results in Figure \ref{fig:hyper-pl} suggest that the dimensionality of the equilibria on $H$ relates to the strength of repulsion (value of the exponent $p$ for potential \eqref{eqn:K-pl}), as for the model in the Euclidean plane (see \cite{Balague_etalARMA}). Investigating this connection further is also an interesting future direction that can be pursued.


In closing, we believe that the aggregation model on general manifolds that has been proposed in this paper, the general construction of an interaction potential that leads to constant density equilibria on the sphere and the hyperbolic plane (along with the analytical considerations that can be made in such cases), as well as the various numerical illustrations that demonstrated swarming with other interaction potentials, will have set up a framework for and motivate further research and developments on self-organization models and their applications.

\bibliographystyle{abbrv}
\def\url#1{}

\def\cprime{$'$}


\end{document}